\theoremstyle{plain}
\newtheorem{theorem}{Theorem}[section]
\newtheorem{proposition}[theorem]{Proposition}
\newtheorem{lemma}[theorem]{Lemma}
\newtheorem{remark}[theorem]{Remark}
\numberwithin{equation}{section}
\newcommand{\bbR}{{\mathbb R}}
\newcommand{\cL}{{\mathcal L}}
\newcommand{\cD}{{\mathcal D}}
\newcommand{\cN}{{\mathcal N}}
\newcommand{\cH}{{\mathcal H}}
\newcommand{\cE}{{\mathcal E}}
\newcommand{\cC}{{\mathcal C}}
\newcommand{\cB}{{\mathcal B}}
\renewcommand{\Im}{{\mathrm{Im}}}
\newcommand\ulambda{{\underline{\lambda}}}
\newcommand\Xom{{\mathcal L}^2}
\newcommand\tepsilon{{\tilde{\epsilon}}}
\begin{document}

\title[]{Arrest of blowup for the 3-D semi-relativistic Schr\"odinger-Poisson system with pseudo-relativistic diffusion}


\author{W. Abou Salem, T. Chen and V. Vougalter}


\address{Department of Mathematics and Statistics, University of Saskatchewan, Saskatoon S7N 5E6, Canada \\ E-mail: w.abousalem@usask.ca}

\address{Department of Mathematics, University of Texas at Austin, Austin, TX, 78712, USA \\ E-mail: tc@math.utexas.edu}
 
\address{Department of Mathematics and Applied Mathematics, University of Cape Town, Private Bag, Rondebosch 7701, South Africa \\E-mail: vitali@math.toronto.edu}

\maketitle


\begin{abstract} We show global well-posedness in energy norm of the semi-relativistic Schr\"odinger-Poisson system of equations with attractive Coulomb interaction in ${\mathbb R}^3$ in the presence of pseudo-relativistic diffusion. We also discuss sufficient conditions to have well-posedness in $\cL^2.$ In the absence of dissipation, we show that the solution corresponding to an initial condition with negative energy blows up in finite time, which is as expected, since the nonlinearity is critical. 
\end{abstract}

\vspace{1cm}
\noindent{\bf Keywords:} Schr\"odinger-Poisson system, mean-field dynamics, pseudo-relativistic diffusion, fractional Laplacian, functional spaces, density matrices, Cauchy problem, global existence, blowup of solutions

\noindent{\bf AMS Subject Classification:} 82D10, 82C10

\section{Introduction and heuristic discussion}

The main purpose of this work is to show for solutions 
to the semi-relativistic Schr\"odinger-Poisson system with 
attractive Coulomb interaction that
an arbitrarily small pseudo-relativistic diffusion term arrests blow up. 
This system describes the mean-field dynamics of interacting bosons in a {\it mixed} state, see \cite{A10}. For a discussion about the mathematical properties of the pseudo-relativistic diffusion term, see \cite{CMS90, R02}, and \cite{BMN10} for a physical discussion. The special case when the bosons are in a pure state yields the semi-relativistic Hartree equation, whose well-posedness and blow up has been studied in \cite{L07, FL07, CO07}.

For attractive Coulomb interaction, and in the absence of dissipation, the global well-posedness in energy space for small initial data in $L^2$-norm follows from Theorem 1.1 in \cite{ACV13}, see also \cite{ACV12, ACV13-2}. Due to the criticality of the attractive nonlinearity, the solution is expected to blow up in finite time in energy space. Here, we show that this is indeed the case for initial data with negative energy and in the absence of dissipation; however, pseudo-relativistic diffusion arrests the blow up. 

We note that there has been a substantial amount of work on the effects of dissipation on blow up of solutions for the nonrelativistic nonlinear Schr\"odinger (NLS) equation in recent years. In \cite{PSS05}, it is shown that when linear or nonlinear
dissipative terms are incorporated in the focusing cubic NLS equation in two spatial dimensions, the equation becomes globally well-posed in $H^{1}({\mathbb R}^{2})$. The effect of dissipation is also studied numerically in \cite{FK12} for the critical and supercritical NLS equation with the nonlinear  dissipation. In paper \cite{AS10} the authors show the global  well-posedness for the cubic NLS with nonlinear damping when
the external quadratic confining potential is present. Our analysis can be viewed as a generalization of some of these results to a semi-relativistic system with infnitely coupled NLS equations. The analysis below can also be applied to study arrest of collapse for semi-relativistic Hartree-Fock, see \cite{HLLS10} where blowup in absence of dissipation is studied.

The organization of this note is as follows. In section \ref{sec:MainResults}, we describe our model and state the main result.  In section \ref{sec:Dissipation}, we prove local and then global well-posedness of the Schr\"odinger-Poisson  system of equations in the presence of dissipation. Finally, in section \ref{sec:blow up}, we prove that the solution blows up in energy space in the absence of dissipation and for suitable initial conditions. 

\subsection{Notation}
\begin{itemize}

\item $A\lesssim B$ means that there exists a positive constant $C$ such that $A\le C\; B.$ Moreover, $A\sim B$ means $A\lesssim B$ and $B\lesssim A.$
\item $L^p$ stands for the standard Lebesgue space. Furthermore, $L_I^pB = L^p(I;B).$ $\langle \cdot, \cdot \rangle_{L^2}$ denotes the $L^2(\bbR^n)$ inner product, which is given by 
$$\langle u,v\rangle_{L^2} = \int_{{\mathbb R}^3} \overline{u}(x)v(x)dx.$$
 We will often use the abbreviated notation $L^p_T$ for $L^p_{[0,T]}$, 
in the situation where $[0,T]$ denotes a time interval.

\item $l^1 = \{\{a_l\}_{l\in {\mathbb N}} | \ \ \sum_{l\ge 1} |a_l| <\infty\}.$
\item   $W^{s,p} = (-\Delta+1)^{-\frac{s}{2}}L^p,$ the standard (complex) Sobolev space. When $p=2$, $W^{s,2}= H^s$. 
$\dot{H}^s$ denotes the homogeneous Sobolev space with norm 
$\|\psi\|_{\dot{H}^s} = (\langle \psi, (-\Delta)^s \psi\rangle_{L^2})^{\frac{1}{2}}.$
\item For fixed $\ulambda\in l^1, \ \ \lambda_k> 0,$ and for sequences of functions
$\Phi:=\{\phi_{k}\}_{k\in {\mathbb N}}$ and 
$\Psi:=\{\psi_{k}\}_{k\in {\mathbb N}},$ we define the inner product
$$
\langle\Phi,\Psi\rangle_{\Xom}:=\sum_{k\ge 1}\lambda_{k}\langle\phi_{k}, 
\psi_{k}\rangle_{L^{2}},
$$ 
which induces the norm 
$$
\|\Phi\|_{\Xom}=(\sum_{k\ge 1}\lambda_{k}{\|\phi_{k}\|
_{L^{2}}^{2}})^{\frac{1}{2}}.
$$
The corresponding Hilbert space is $\Xom.$
\item For fixed $\ulambda\in l^1, \ \ \lambda_k> 0,$ $$\cH^s = \{ \Psi=\{\psi_k\}_{k\in {\mathbb N}} | \ \ \psi_k\in H^s, \ \ \sum_{k\ge 1} \lambda_k \|\psi_k\|_{H^s}^2<\infty \}$$ is a Banach space with norm
$\|\Psi\|_{\cH^s} = (\sum_{k\ge 1}\lambda_k \|\psi_k\|_{H^s}^2)^{\frac{1}{2}}.$ In particular, $\cH^0 = \cL^2.$

\item For fixed $\ulambda\in l^1, \ \ \lambda_k> 0,$ $$\dot{\cH}^s = \{ \Psi=\{\psi_k\}_{k\in {\mathbb N}} | \ \ \psi_k\in \dot{H}^s, \ \ \sum_{k\ge 1} \lambda_k \|\psi_k\|_{\dot{H}^s}^2<\infty\}$$ is a Banach space with norm
$\|\Psi\|_{\dot{\cH}^s} = (\sum_{k\ge 1}\lambda_{k}\|\psi_{k}\|_{\dot{H}^s}^{2})^{\frac{1}{2}}.$

\end{itemize}


\section{The model and main results}\label{sec:MainResults}

In this article, we study the global Cauchy problem for the semi-relativistic Schr\"odinger-Poisson system in the presence of linear dissipation. 
For $s\ge 0,$ we define the state space for the Schr\"odinger-Poisson system  by
$$
{\mathcal S^s}:=\{ (\Psi, \ulambda)  | \ \  \Psi=\{\psi_{k}\}_{k\ge 1}\in \cH^s \; \; is \; a \; complete \; 
orthonormal \; system
$$
$$
 \; in \; L^{2}({\mathbb R}^3), \ \ \ulambda = \{\lambda_k\}_{k\in{\mathbb N}}\in l^1, \; \lambda_k > 0\}.
$$
For $\Psi(t,x)=\{\psi_{k}(t,x)\}_{k\ge 1},$ the Schr\"odinger-Poisson system of equations, which is a system of infinitely coupled nonlinear equations, is given by
\begin{equation}
\label{eq:SP}
i\partial_t \Psi(t,x)=\sqrt{-\Delta + m^2}\Psi(t,x) +V[\Psi(t)](x)\Psi(t,x) -i \epsilon (-\Delta)^\alpha\Psi(t,x) , 
\end{equation}  
\begin{equation}
\label{eq:V}
\Delta V[\Psi(t)](x)= n[\Psi(t,x)], 
\end{equation} 
\begin{equation} 
\label{eq:n}
n[\Psi(t,x)]=\sum_{k=1}^{\infty}\lambda_{k}|\psi_{k}(t,x)|^{2} \, ,
\end{equation} 
where $t\ge 0,$ $x\in {\mathbb R}^3,$ $m,$ $\epsilon\ge 0,$ and $\alpha\ge \frac{1}{2}.$
Note that the nonlinear potential is focusing and is given by 
$$V[\Psi(t)](x) := (V[\Psi(t,\,\cdot\,)])(x) = - (\frac{1}{|\cdot|} \star n[\Psi(t,\,\cdot\,)])(x).$$
To explain the role of the coefficients $\{\lambda_k\}_{k\in {\mathbb N}}$, we note that the
Schr\"odinger-Poisson system is equivalent to the Heisenberg equations
$$i \partial_t \gamma = [\sqrt{-\Delta+m^2},\gamma]-[\frac1{|\cdot|}\star\rho,\gamma]-i\epsilon\{(-\Delta)^\alpha,\gamma\}$$
for the density matrices
$$\gamma(t,x,x') = \sum_k \lambda_k \psi_k(t,x) \overline{\psi_k}(t,x')$$
with initial data $\gamma_0$, provided that the sequence $\{\lambda\}_{k\in {\mathbb N}}$ is independent of $t$.
Here, $\rho(t,x):=\gamma(t,x,x)$, and $\{A,B\}:=AB+BA$ denotes the anticommutator.

We note that while our work is formulated for the state space ${\mathcal S^s}$ characterized
by infinite sequences $\{\lambda\}_{k\in {\mathbb N}}$,
where $\lambda_k>0$, one can easily accommodate  vanishing coefficients ($\lambda_k=0$ for some values of 
$k$) or
the finite rank case (where $\lambda_k=0$ for all $k$ larger than some $K\in{\mathbb N}$)
either through restriction to the subspace spanned by the nonvanishing coefficients, 
or by a density argument.

The following is our first main result about the global Cauchy problem in the presence of pseudo-relativistic diffusion.

\begin{theorem}\label{th:GWP}
Consider the system of equations (\ref{eq:SP})-(\ref{eq:n}) with focusing nonlinearity and with $\epsilon>0.$  
\begin{itemize}
\item[(i)] If $\alpha>\frac{1}{2}$ and $(\Psi(0),\ulambda)\in {\mathcal S}^0,$  then there is a unique mild solution $(\Psi,\ulambda)\in C([0,\infty], {\mathcal S}^0).$  
\item[(ii)] If $\alpha=\frac{1}{2}$ and $(\Psi(0),\ulambda)\in {\mathcal S}^\frac{1}{2},$  then there is a unique mild solution $(\Psi,\ulambda)\in C([0,\infty], {\mathcal S}^\frac{1}{2}).$  
\end{itemize}
\end{theorem}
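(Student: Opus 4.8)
The plan is to solve the Duhamel (mild) formulation
\[
\Psi(t)=\mathcal{U}_\epsilon(t)\Psi(0)-i\int_0^t\mathcal{U}_\epsilon(t-s)\,V[\Psi(s)]\Psi(s)\,ds,
\qquad
\mathcal{U}_\epsilon(t):=e^{-it\sqrt{-\Delta+m^2}-\epsilon t(-\Delta)^\alpha},
\]
componentwise, first by a contraction argument on a short time interval, and then to globalize via a priori bounds produced by the dissipation. The decisive feature of $\mathcal{U}_\epsilon$ is the parabolic smoothing of its dissipative factor: since its Fourier multiplier has modulus $e^{-\epsilon t|\xi|^{2\alpha}}$, one has, for $t>0$ and $0\le\sigma<2\alpha$,
\[
\|\mathcal{U}_\epsilon(t)\Phi\|_{\dot{\cH}^{\sigma}}\lesssim(\epsilon t)^{-\frac{\sigma}{2\alpha}}\|\Phi\|_{\cL^2}
\]
(and likewise from $\cH^{\sigma_0}$ to $\cH^{\sigma}$ with gain $\sigma-\sigma_0$), the cost being integrable in time precisely when the gain is strictly less than $2\alpha$. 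Since $V[\Psi]=-|\cdot|^{-1}\star n[\Psi]$ is a single \emph{real} scalar potential, every estimate reduces to a scalar Hartree-type inequality which is then summed against the fixed weights $\lambda_k$.

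For the nonlinearity I would combine Hardy--Littlewood--Sobolev with Sobolev embedding. Using $\||\cdot|^{-1}\star n\|_{L^6}\lesssim\|n\|_{L^{6/5}}$, the bound $\|\psi_k\|_{L^{12/5}}^2\lesssim\|\psi_k\|_{L^2}\|\psi_k\|_{\dot H^{1/2}}$ together with Cauchy--Schwarz in $k$, and $\dot H^{1/2}(\bbR^3)\hookrightarrow L^3(\bbR^3)$, one gets
\[
\|n[\Psi]\|_{L^{6/5}}\lesssim\|\Psi\|_{\cL^2}\|\Psi\|_{\dot{\cH}^{1/2}},
\qquad
\|V[\Psi]\Psi\|_{\cL^2}\lesssim\|\Psi\|_{\cL^2}\|\Psi\|_{\dot{\cH}^{1/2}}^2,
\]
with matching Lipschitz estimates for differences; crucially, only $\dot{\cH}^{1/2}$-regularity enters here. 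The contraction then runs in a space of the form $X_T=C([0,T];\cH^{s_0})\cap\{\Psi:\ \sup_{0<t\le T}t^{\beta}\|\Psi(t)\|_{\dot{\cH}^{1/2}}<\infty\}$, with $s_0=0$ in case (i) and $s_0=\tfrac12$ in case (ii). In case (i) one invokes the smoothing $\cL^2\to\dot{\cH}^{1/2}$, whose cost $(\epsilon t)^{-\frac1{4\alpha}}$ is integrable and leaves room for the Duhamel iteration exactly because $\alpha>\tfrac12$ (the admissible exponents $\beta$ satisfy $\tfrac1{4\alpha}\le\beta\le1-\tfrac1{4\alpha}$ and $2\beta<1$, a nonempty range iff $\alpha>\tfrac12$); in case (ii), $\tfrac12$ is already the base regularity and the additional fractional derivative that the $\cH^{1/2}$-estimate of $V[\Psi]\Psi$ puts on the nonlinearity is absorbed by the $\cH^{1/2}\to\cH^{3/4}$ smoothing via a fractional Leibniz estimate. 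This yields a unique mild solution on a maximal interval whose length is bounded below by a function of $\|\Psi(0)\|_{\cH^{s_0}}$ only (for fixed $\epsilon,\alpha$).

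To pass to global solutions we use dissipation of mass: since $\sqrt{-\Delta+m^2}$ and multiplication by the real function $V[\Psi]$ are self-adjoint,
\[
\frac{d}{dt}\|\Psi(t)\|_{\cL^2}^2=-2\epsilon\,\|\Psi(t)\|_{\dot{\cH}^{\alpha}}^2\le0,
\qquad\text{hence}\qquad
\int_0^\infty\|\Psi(s)\|_{\dot{\cH}^{\alpha}}^2\,ds\le\frac1{2\epsilon}\|\Psi(0)\|_{\cL^2}^2 .
\]
In case (i) this already closes the argument: the local existence time depends only on $\|\Psi(t_0)\|_{\cL^2}\le\|\Psi(0)\|_{\cL^2}$, so the solution continues to $[0,\infty)$. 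In case (ii), where $\dot{\cH}^{\alpha}=\dot{\cH}^{1/2}$ is also the energy regularity and the nonlinearity is mass-critical, one needs in addition an a priori bound on $\|\Psi(t)\|_{\dot{\cH}^{1/2}}$. For that I would differentiate $\|\Psi\|_{\dot{\cH}^{1/2}}^2=\sum_k\lambda_k\|(-\Delta)^{1/4}\psi_k\|_{L^2}^2$ along the flow; the diagonal contribution $\Im\langle(-\Delta)^{1/4}\psi_k,V[\Psi](-\Delta)^{1/4}\psi_k\rangle_{L^2}$ vanishes because $V[\Psi]$ is real, leaving a commutator term which one estimates by
\[
\|[(-\Delta)^{1/4},V[\Psi]]\psi_k\|_{L^2}\lesssim\|n[\Psi]\|_{L^{3/2}}\|\psi_k\|_{\dot H^{1/2}}\lesssim\|\Psi\|_{\dot{\cH}^{1/2}}^2\,\|\psi_k\|_{\dot H^{1/2}} ,
\]
so that
\[
\frac{d}{dt}\|\Psi(t)\|_{\dot{\cH}^{1/2}}^2\le C\,\|\Psi(t)\|_{\dot{\cH}^{1/2}}^2\cdot\|\Psi(t)\|_{\dot{\cH}^{1/2}}^2-2\epsilon\,\|\Psi(t)\|_{\dot{\cH}^{1}}^2 .
\]
A pointwise Gronwall would only control this up to a finite time; instead, regarding the inequality as a bound on $\tfrac{d}{dt}\log\|\Psi\|_{\dot{\cH}^{1/2}}^2$ and inserting the global bound $\int_0^\infty\|\Psi(s)\|_{\dot{\cH}^{1/2}}^2\,ds<\infty$ from the mass identity yields the \emph{uniform} bound $\|\Psi(t)\|_{\dot{\cH}^{1/2}}^2\le\|\Psi(0)\|_{\dot{\cH}^{1/2}}^2\exp(C\epsilon^{-1}\|\Psi(0)\|_{\cL^2}^2)$; together with the mass bound, $\|\Psi(t)\|_{\cH^{1/2}}$ stays bounded, so the local existence time (which in case (ii) depends only on $\|\Psi(t_0)\|_{\cH^{1/2}}$) stays bounded below and the solution is global. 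Continuity in time follows from the integral equation, and $\ulambda$ is held fixed along the flow.

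The crux — and the only genuinely delicate point — is the endpoint $\alpha=\tfrac12$ of part (ii): there the dissipative smoothing is merely scaling-critical, so one cannot exploit a derivative gain and is forced to work at the energy regularity $\cH^{1/2}$ with sharp fractional Leibniz and commutator estimates; moreover the mass-critical nonlinearity makes the $\dot{\cH}^{1/2}$ energy inequality quadratic in $\|\Psi\|_{\dot{\cH}^{1/2}}^2$, which on its own would not rule out finite-time blowup. Both obstructions are overcome by the dissipation: its reality-protected structure cancels the leading term in the $\dot{\cH}^{1/2}$ estimate, and the square-integrability in time of $\|\Psi(\cdot)\|_{\dot{\cH}^{1/2}}$ that it forces turns the quadratic differential inequality into a convergent logarithmic Gronwall bound — which is exactly the mechanism by which an arbitrarily small diffusion arrests the collapse.
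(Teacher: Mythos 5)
Your proposal is correct in outline and shares the paper's overall architecture (local theory by contraction $+$ dissipation-induced a priori bounds $+$ blow-up alternative), but two of your technical ingredients differ genuinely from the paper's. For the local theory, the paper proves Proposition \ref{pr:LWP2} (case (i)) with the space--time estimate of Lemma \ref{lm:Strichartz} for $S_\alpha(t)=e^{-\epsilon t(-\Delta)^\alpha}$, working only in $L^\infty_T\cL^2$ and producing the factor $T^{1-\frac{1}{2\alpha}}$, which is positive exactly when $\alpha>\frac12$; you instead run a Kato-type iteration in a time-weighted space $C_T\cL^2\cap\{\sup_t t^\beta\|\Psi(t)\|_{\dot{\cH}^{1/2}}<\infty\}$ using the pointwise smoothing bound $(\epsilon t)^{-\frac{1}{4\alpha}}$, and your exponent bookkeeping ($\tfrac{1}{4\alpha}\le\beta$, $2\beta<1$) reproduces the same threshold $\alpha>\tfrac12$ — the two are essentially equivalent, yours needing an auxiliary norm, the paper's giving uniqueness directly in $C_T\cL^2$. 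For case (ii) the paper needs no smoothing at all: the nonlinearity is locally Lipschitz on $\cH^{1/2}$ (estimate (\ref{eq:Lipschitz})), so Proposition \ref{pr:LWP} is a plain contraction; your appeal to $\cH^{1/2}\to\cH^{3/4}$ smoothing is harmless but unnecessary, since there is no derivative loss. The more substantive divergence is the a priori bound at $\alpha=\tfrac12$: the paper (Lemma \ref{lm:Hnorm}) estimates the nonlinear term crudely by $\|\Psi\|_{\cH^{1/2}}^4\lesssim(1+\|\Psi\|_{\dot{\cH}^{1/2}}^2)\|\Psi\|_{\cH^{1/2}}^2$ and Gronwall plus $\int_0^t\|\Psi\|_{\dot{\cH}^{1/2}}^2\,ds\lesssim\epsilon^{-1}$ gives a bound growing like $e^{Ct/\epsilon}$ — finite on every bounded interval, which is all the blow-up alternative requires; you instead exploit the reality of $V[\Psi]$ to reduce to a commutator $[(-\Delta)^{1/4},V[\Psi]]$, bound it purely in homogeneous norms, and close a logarithmic Gronwall to obtain the stronger \emph{uniform-in-time} bound $\|\Psi(t)\|_{\dot{\cH}^{1/2}}^2\le\|\Psi(0)\|_{\dot{\cH}^{1/2}}^2e^{C\|\Psi(0)\|_{\cL^2}^2/\epsilon}$. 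That is a genuine refinement of the paper's conclusion, at the price of a Kenig--Ponce--Vega-type commutator estimate not in the paper's appendix (though in fact the appendix's fractional Leibniz rule together with Lemma \ref{lm:fSmoothing} already yields the purely homogeneous quartic bound $\lesssim\|\Psi\|_{\dot{\cH}^{1/2}}^2\|\psi_k\|_{\dot{H}^{1/2}}^2$ without invoking the cancellation). Two small points you should still supply in a full write-up: the differentiation of $\|\Psi(t)\|_{\dot{\cH}^{1/2}}^2$ and of the mass must be justified by a regularization such as the paper's $(1-\delta\Delta)^{-1/2}$ insertion (the flow is only at the base regularity), and in case (i) the contraction with the time-weighted norm should be arranged (e.g.\ $\beta>\tfrac{1}{4\alpha}$ strictly, or a smooth-plus-small splitting of the data) so that the local existence time depends only on $\|\Psi(0)\|_{\cL^2}$, which is what the continuation argument uses.
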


We now discuss the blow up of the solution in the absence of dissipation. When $\epsilon=0,$ 
the energy functional associated with the semi-relativistic Schr\"odinger-Poisson system without dissipation, 
\begin{equation}
\label{eq:Energy}
\cE(\Psi) = \frac{1}{2}\langle \Psi, \sqrt{-\Delta + m^2}\Psi\rangle_{\cL^2} + \frac{1}{4}\langle \Psi, V[\Psi]\Psi\rangle_{\cL^2} ,
\end{equation}
is conserved, see Remark \ref{rm:ConservationEnergy}. We have the following result for blow up of a spherically symmetric solution.

\begin{theorem}\label{th:blow up}
Consider the system of equations (\ref{eq:SP})-(\ref{eq:n}) with focusing nonlinearity and in absence of dissipation, $\epsilon=0.$ Suppose that $(\Psi(0),\ulambda)\in {\mathcal S}^{\frac{1}{2}}$ such that $\Psi(0)=\{\psi_k(0)\}_{k\ge 1}$ is spherically symmetric, $\psi_k(0)\in C_c^\infty({\mathbb R}^3),$ and 
$$\cE(\Psi(0))<0.$$
Then there exists a finite positive time $\tau^*,$ such that 
$$\lim_{t\nearrow \tau^*}\|\Psi(t)\|_{\cH^{\frac{1}{2}}} = \infty, \ \ 0<\tau^*<\infty.$$
\end{theorem}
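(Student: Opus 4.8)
The plan is to run a localized virial (Glassey-type) argument in the spirit of the blowup proof for the semi-relativistic Hartree equation in \cite{FL07}, adapted to the density-matrix framework. Two structural features make this possible. First, every component $\psi_k(t)$ of $\Psi(t)=\{\psi_k(t)\}_{k\ge1}$ solves the \emph{same} equation, $i\partial_t\psi_k=\sqrt{-\Delta+m^2}\,\psi_k+V[\Psi]\psi_k$, with the \emph{common} multiplication operator $V[\Psi]=-\frac{1}{|\cdot|}\star n[\Psi]$; hence any virial-type quantity can be written as a $\lambda_k$-weighted sum over $k$, and the contributions of $V[\Psi]$ then reassemble into quadratic forms in the density $n[\Psi]=\sum_k\lambda_k|\psi_k|^2$ matching the Coulomb term of $\cE$. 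Second, spherical symmetry of each $\psi_k(0)$ is propagated by the flow, so $\Psi(t)$ stays radial. Because $\sqrt{-\Delta+m^2}$ has no finite speed of propagation, the unbounded weight $|x|$ (the homogeneity dictated by the half-wave scaling) is not preserved, and one must use a smooth radial cutoff $\varphi_R$ adapted to the equation: comparable to $|x|$ at intermediate scales, constant for $|x|\ge 2R$, with $|\nabla\varphi_R|\lesssim1$ and $|\nabla^2\varphi_R|\lesssim1$. Work with
\[
V_R(t):=\langle\Psi(t),\varphi_R\Psi(t)\rangle_{\cL^2}=\int_{\bbR^3}\varphi_R(x)\,n[\Psi(t,x)]\,dx\;\ge\;0 ,
\]
finite since $\varphi_R$ is bounded and $\|\Psi(t)\|_{\cL^2}$ is conserved.

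Next, differentiate $V_R$. Since $[\varphi_R,V[\Psi]]=0$, the potential drops out of the first derivative: $\dot V_R(t)=\mathcal M_R(t):=\sum_k\lambda_k\langle\psi_k,i[\sqrt{-\Delta+m^2},\varphi_R]\psi_k\rangle_{L^2}$, and $|\mathcal M_R(t)|\lesssim\|\nabla\varphi_R\|_{L^\infty}\|\Psi(t)\|_{\cL^2}^2\lesssim\|\Psi(0)\|_{\cL^2}^2$ uniformly in $t$. Differentiating once more and using the equation,
\[
\ddot V_R(t)=-\sum_k\lambda_k\langle\psi_k,[\sqrt{-\Delta+m^2},[\sqrt{-\Delta+m^2},\varphi_R]]\psi_k\rangle+i\sum_k\lambda_k\langle\psi_k,[V[\Psi],[\sqrt{-\Delta+m^2},\varphi_R]]\psi_k\rangle .
\]
For the untruncated weight $|x|$ the kinetic double commutator is an explicit Fourier multiplier, and — after the $\lambda_k$-summation — the potential double commutator reassembles into the Coulomb part of $\cE(\Psi)$; collecting terms and using conservation of energy (Remark \ref{rm:ConservationEnergy}) one arrives at a differential inequality of the form $\ddot V_R(t)\le c\,\cE(\Psi(0))+\mathrm{Err}_R(t)$ with $c>0$, where $\mathrm{Err}_R(t)$ collects the truncation corrections and is supported essentially in $\{|x|\gtrsim R\}$.

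To control $\mathrm{Err}_R(t)$, use the radial Sobolev (Strauss-type) decay of each $\psi_k(t)$, conservation of $\|\Psi\|_{\cL^2}$, and Hardy--Littlewood--Sobolev bounds on the Coulomb terms; the crucial point is that, although the Coulomb nonlinearity is $L^2$-critical, its \emph{localized} tails carry a strictly subcritical power of $\|\Psi(t)\|_{\dot{\cH}^{1/2}}$, so that $\mathrm{Err}_R(t)$ can be made $\le\tfrac12|c\,\cE(\Psi(0))|$ by taking $R$ large, uniformly in $t$ (the mass $m$ in $\sqrt{-\Delta+m^2}$ contributes only terms of favorable sign). Hence $\ddot V_R(t)\le\tfrac12 c\,\cE(\Psi(0))=:-\delta<0$ on the maximal existence interval $[0,\tau^*)$. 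Integrating and using $V_R\ge0$, one gets $0\le V_R(t)\le V_R(0)+\dot V_R(0)\,t-\tfrac\delta2 t^2$ for $t<\tau^*$; the right-hand side becomes negative at a finite time, forcing $\tau^*<\infty$. By the blow-up alternative of the local $\cH^{1/2}$ Cauchy theory (available with $\epsilon=0$ as in \cite{ACV13}), $\|\Psi(t)\|_{\cH^{1/2}}\to\infty$ as $t\nearrow\tau^*$.

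The main obstacle lies in the last two paragraphs: deriving the precise commutator identities and estimates for $\sqrt{-\Delta+m^2}$ against $\varphi_R$ (a pseudodifferential computation, necessitated by the nonlocality), verifying that the $\lambda_k$-summation reassembles the potential contributions exactly into the Coulomb part of $\cE$, and — because the nonlinearity is critical — showing that $\mathrm{Err}_R(t)$ is genuinely of lower order and can be absorbed into the gain from $\cE(\Psi(0))<0$ for $R$ large, uniformly in time. A secondary point is the rigorous justification that $V_R(t)$ is twice differentiable and that the above identities hold; this is where the hypothesis $\psi_k(0)\in C_c^\infty(\bbR^3)$ is used, via persistence of regularity for the local solution or a standard approximation argument.
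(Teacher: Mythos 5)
Your proposal takes a genuinely different route from the paper, and its central step is not justified. The paper follows \cite{FL07}: it works with the \emph{unbounded} weights $M=x\sqrt{-\Delta+m^2}\,x$ and the dilation operator $A=\tfrac12(x\cdot p+p\cdot x)$, first proving propagation of moments ($\||x|^j\psi_k(t)\|_{L^2}<\infty$, $j=1,2$; a cutoff appears only there, to justify finiteness), and then the two inequalities $\tfrac{d}{dt}\langle\Psi,M\Psi\rangle_{\cL^2}\le 2\langle\Psi,A\Psi\rangle_{\cL^2}+C_{\Psi_0}$ and $\tfrac{d}{dt}\langle\Psi,A\Psi\rangle_{\cL^2}\le 2\cE(\Psi_0)$, whence $0\le\langle\Psi(t),M\Psi(t)\rangle_{\cL^2}\le 2\cE(\Psi_0)t^2+C_1t+C_2$ and the maximal time is finite. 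The decisive use of spherical symmetry is Newton's theorem, which yields the pointwise bounds $|V[\Psi(t)](x)|\lesssim \|\Psi_0\|_{\cL^2}^2/|x|$ and $|\nabla V[\Psi(t)](x)|\lesssim \|\Psi_0\|_{\cL^2}^2/|x|^2$; consequently all potential--commutator errors are controlled by the \emph{conserved mass alone}, uniformly in time, and no absorption of solution-dependent errors is ever required.

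The gap in your localized-virial scheme is precisely the claim that $\mathrm{Err}_R(t)\le\tfrac12|c\,\cE(\Psi_0)|$ uniformly in $t\in[0,\tau^*)$ for $R$ large. The truncation errors you describe are controlled (via Strauss-type decay and HLS) by positive powers of $\|\Psi(t)\|_{\dot{\cH}^{\frac12}}$ times negative powers of $R$; but $\|\Psi(t)\|_{\dot{\cH}^{\frac12}}$ is exactly the quantity expected to diverge as $t\nearrow\tau^*$, so for fixed $R$ such a bound cannot be made uniformly small, and invoking a ``strictly subcritical power'' does not rescue it because there is nothing to absorb it into: for this $L^2$-critical Hartree nonlinearity the exact dilation identity reads $\tfrac{d}{dt}\langle\Psi,A\Psi\rangle_{\cL^2}=2\cE(\Psi_0)-\langle\Psi,\tfrac{m^2}{\sqrt{p^2+m^2}}\Psi\rangle_{\cL^2}$, i.e.\ the kinetic and Coulomb parts cancel up to the constant $2\cE(\Psi_0)$ and a favorably signed mass term, leaving no negative multiple of the kinetic energy on the right-hand side into which an $o_R(1)\,\|\Psi\|_{\dot{\cH}^{1/2}}^{\theta}$ error could be absorbed (this absorption mechanism is what powers localized-virial proofs in non-critical fractional NLS settings, and it is unavailable here). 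A further, secondary, obstruction is that the radial Strauss estimate is borderline at regularity $s=\tfrac12$ in three dimensions. So as written your differential inequality $\ddot V_R\le-\delta<0$ is not established; to close the argument you either need tail estimates depending only on conserved quantities --- which is exactly what Newton's theorem plus the unbounded weights deliver in the paper, bypassing the cutoff altogether --- or a genuinely new uniform-in-time control of the localized errors.
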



\section{Well-posedness in the presence of dissipation}\label{sec:Dissipation}

\subsection{Local well-posedness}\label{sec:LWP}

Note first that for $t\ge 0,$ $$e^{-it(\sqrt{-\Delta + m^2} -i\epsilon (-\Delta)^\alpha)},$$ with $\alpha,\epsilon,m\ge 0,$ is a contraction semigroup on $\cH^{s}, \ \ s\ge \frac{1}{2}.$ Furthermore, the nonlinearity is locally Lipschitz in $\cH^{s}, \ \ s\ge \frac{1}{2}.$ Using a standard contraction map argument, one can show local well-posedness of the Schr\"odinger-Poisson system of equations with linear dissipation.

\begin{proposition}\label{pr:LWP}
Consider the system of equations (\ref{eq:SP})-(\ref{eq:n}), with $\epsilon\ge 0.$ Suppose that $(\Psi(0),\underline{\lambda})\in {\mathcal S^s},$ $\frac{1}{2}\le s\le 2.$ Then there exists a positive time $T$ such that the unique solution $\Psi \in C([0,T]; \cH^s).$ Furthermore, there exists a maximal time $\tau^*\in (0,\infty]$ such that $$\lim_{t\nearrow\tau^*}\|\Psi(t)\|_{\cH^s}=\infty.$$ 
\end{proposition}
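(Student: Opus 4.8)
The plan is to recast \eqref{eq:SP}–\eqref{eq:n} (with the fixed coefficient sequence $\ulambda$ frozen and with $V[\Psi]=-\tfrac1{|\cdot|}\star n[\Psi]$ substituted in) in Duhamel form and solve it by a contraction mapping argument in $C([0,T];\cH^s)$. Writing $\cL_\epsilon:=\sqrt{-\Delta+m^2}-i\epsilon(-\Delta)^\alpha$ and $\cN(\Psi):=V[\Psi]\Psi$ for the nonlinearity, a mild solution is a fixed point of
\begin{equation*}
(\cJ\Psi)(t):=e^{-it\cL_\epsilon}\Psi(0)-i\int_0^t e^{-i(t-\sigma)\cL_\epsilon}\cN(\Psi(\sigma))\,d\sigma ,
\end{equation*}
where $e^{-it\cL_\epsilon}$ acts diagonally on $\Psi=\{\psi_k\}_{k\ge1}$. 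Since, by the remarks preceding the statement, $e^{-it\cL_\epsilon}$ is a contraction semigroup on $\cH^s$ for $s\ge\frac12$ and $\alpha,\epsilon,m\ge0$, the linear term is bounded in $\cH^s$ by $\|\Psi(0)\|_{\cH^s}$ uniformly in $t\in[0,T]$, and the Duhamel integral is bounded by $T\sup_{\sigma\le T}\|\cN(\Psi(\sigma))\|_{\cH^s}$.

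The second ingredient is the multilinear estimate for $\cN$ on $\cH^s$, $\tfrac12\le s\le2$: there is $C>0$ so that $\|\cN(\Psi)\|_{\cH^s}\le C(\|\Psi\|_{\cL^2}^2+\|\Psi\|_{\cH^s}^2)\|\Psi\|_{\cH^s}$ and, for $\|\Psi\|_{\cH^s},\|\Phi\|_{\cH^s}\le R$, $\|\cN(\Psi)-\cN(\Phi)\|_{\cH^s}\le C(1+R^2)\|\Psi-\Phi\|_{\cH^s}$. These are exactly the local Lipschitz property of the nonlinearity already noted above, which in turn rests on three facts: $\Delta V[\Psi]=n[\Psi]$ together with the Hardy–Littlewood–Sobolev and Young inequalities in $\bbR^3$, which control $\|V[\Psi]\|_{L^\infty}$ and $\|V[\Psi]\|_{\dot W^{s,p}}$ by products of $\|n[\Psi]\|_{L^1}$ and $\|n[\Psi]\|_{L^q}$; the bound $\|n[\Psi]\|_{L^1}+\|n[\Psi]\|_{L^q}\lesssim\|\Psi\|_{\cL^2}^2+\|\Psi\|_{\cH^s}^2$, obtained from $n[\Psi]=\sum_k\lambda_k|\psi_k|^2$ by the embeddings $H^s\hookrightarrow L^r$, summing against $\ulambda\in l^1$ and using $\lambda_k>0$; and the fractional Leibniz rule to commute $V[\Psi]$ with $(-\Delta+1)^{s/2}$. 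The restriction $s\le2$ keeps the Sobolev regularity required of $n[\Psi]$ nonpositive, so that $V[\Psi]$ is controlled by Lebesgue norms of $n[\Psi]$ alone.

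With these two estimates the standard scheme closes. Set $R:=2\|\Psi(0)\|_{\cH^s}$ and choose $T=T(R,\ulambda)>0$ so small that $\cJ$ maps the ball $\{\Psi\in C([0,T];\cH^s):\|\Psi\|_{C([0,T];\cH^s)}\le R\}$ into itself and is a contraction on it; Banach's fixed point theorem then yields a unique mild solution $\Psi\in C([0,T];\cH^s)$ in this ball, with strong continuity in $t$ inherited from the semigroup and from continuity of the Duhamel term. A routine Gr\"onwall estimate, using the Lipschitz bound on $\cN$, upgrades uniqueness to the full class $C([0,T];\cH^s)$ and shows continuous dependence on $\Psi(0)$.

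Finally, define $\tau^*$ to be the supremum of all $T>0$ for which a solution exists in $C([0,T];\cH^s)$. If $\tau^*<\infty$ while $\limsup_{t\nearrow\tau^*}\|\Psi(t)\|_{\cH^s}=:M<\infty$, then the local existence time produced above depends only on an upper bound for the $\cH^s$-norm of the datum, so starting from times $t_0<\tau^*$ close enough to $\tau^*$ (where $\|\Psi(t_0)\|_{\cH^s}\le 2M$) one obtains a solution on $[t_0,t_0+T(2M)]$ extending past $\tau^*$, contradicting maximality. Hence $\lim_{t\nearrow\tau^*}\|\Psi(t)\|_{\cH^s}=\infty$ whenever $\tau^*<\infty$.

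I expect the only genuine work to be the nonlinear estimate of the second paragraph: choosing admissible Lebesgue exponents for the Coulomb convolution in three dimensions, disentangling the roles of the $\cL^2$-norm (which governs $\|n[\Psi]\|_{L^1}$) and the $\cH^s$-norm, and checking the borderline Sobolev embeddings at the endpoint $s=\tfrac12$; the remainder is the textbook fixed-point and continuation argument.
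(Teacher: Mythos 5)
Your proposal follows the same architecture as the paper's proof: Duhamel formulation with the contraction semigroup $e^{-it(\sqrt{-\Delta+m^2}-i\epsilon(-\Delta)^\alpha)}$ on $\cH^s$, a trilinear/locally Lipschitz estimate for $\cN(\Psi)=V[\Psi]\Psi$, a fixed point in a ball of $C([0,T];\cH^s)$ with $T$ depending only on $\|\Psi(0)\|_{\cH^s}$, and the standard continuation argument for the blow-up alternative. Those parts are fine and coincide with what is done in the paper.

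The one concrete gap is in the route you sketch for the key estimate, specifically the claim that $\|V[\Psi]\|_{L^\infty}$ (and likewise the Lipschitz difference bound) can be controlled ``by products of $\|n[\Psi]\|_{L^1}$ and $\|n[\Psi]\|_{L^q}$'' via Young and Hardy--Littlewood--Sobolev. At the endpoint $s=\tfrac12$ — which is the case needed for the energy space and for Theorem \ref{th:GWP}(ii) — this does not close: the embedding $H^{1/2}(\bbR^3)\hookrightarrow L^3$ only gives $n[\Psi]\in L^1\cap L^{3/2}$, and the near part $|x|^{-1}\chi_{\{|x|\le1\}}$ lies in $L^p$ only for $p<3$, so Young's inequality misses $L^\infty$ by a logarithm; HLS never yields an $L^\infty$ bound at all. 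The conclusion you want is nonetheless true, but it requires either a Lorentz refinement ($H^{1/2}\hookrightarrow L^{3,2}$, $|x|^{-1}\in L^{3,\infty}$, O'Neil's inequality) or, as the paper does, the Hardy-type inequality of Lemma \ref{lm:fSmoothing}, $\sup_x\int |x-y|^{-1}|u(y)|^2\,dy\lesssim\|u\|_{\dot H^{1/2}}^2$, applied (after polarization) directly to $\tfrac1{|\cdot|}\star(|\psi_l|^2-|\phi_l|^2)$; the remaining term is handled with the fractional Leibniz rule, HLS in $W^{s,6}$, and $H^{1/2}\hookrightarrow L^3$, exactly as in the estimate \eqref{eq:VDiff2}. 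So the fix is local — replace the Lebesgue-norm bookkeeping for the $L^\infty$ piece by the endpoint inequality — but as written that step would fail at $s=\tfrac12$.
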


\begin{remark}\label{rm:ConservationCharge}
It follows from (\ref{eq:SP}) and local well-posedness that in the absence of dissipation ($\epsilon=0$), 
$\frac{d}{dt}\|\psi_k(t)\|_{L^2}^2   =  0,  \ \ k\ge 1.$
Therefore, $$\|\psi_k(t)\|_{L^2} = \|\psi_k(0)\|_{L^2}, \ \ 0\le t<\tau^*, \ \ k\ge 1$$
when $\epsilon=0.$
\end{remark}

\begin{remark}\label{rm:ConservationEnergy}
When $s=\frac{1}{2}$ and $\epsilon=0,$ the energy $\cE(\Psi(t))$ given by (\ref{eq:Energy}) is conserved. Formally, this follows from taking the $\cL^2$-inner product of $\partial_t\Psi(t)$ and right-hand-side of (\ref{eq:SP}). To make the argument precise, one applies a standard regularization argument using $(\delta\sqrt{-\Delta+m^2} + 1)^{-1}, \ \ \delta\ge 0,$ and functional calculus, see Lemma 2.4 in \cite{ACV13}.
\end{remark}

\begin{proof}
Note that the nonlinearity $V[\Psi]\Psi$ is locally Lipschitz:
For $\Psi,\Phi\in \cH^s, \ \ s\ge \frac{1}{2},$
\begin{equation}
\label{eq:Lipschitz}
\|V[\Psi]\Psi - V[\Phi]\Phi\|_{\cH^s} \lesssim (\|\Psi\|_{\cH^s}^2+\|\Phi\|_{\cH^s}^2) \|\Psi-\Phi\|_{\cH^s}, \ \ s\ge\frac{1}{2}.
\end{equation}
This inequality follows from Lemma 2.1 in \cite{ACV13}, where this property is proven for more general nonlinearities. For the sake of completeness, we briefly sketch the proof. 

From the Minkowski inequality, 
\begin{equation}
\|V[\Psi]\Psi - V[\Phi]\Phi\|_{\cH^s} \lesssim \|(V[\Psi] - V[\Phi])\Psi\|_{\cH^s} + \|V[\Phi](\Psi-\Phi)\|_{\cH^s} \label{eq:VDiff1}
\end{equation}
We begin by estimating the first term on the right.
\begin{align}
&\|(V[\Psi] - V[\Phi])\Psi\|_{\cH^s} \lesssim \sum_{k,l\ge 1} \lambda_k\lambda_l \|\frac{1}{|x|}\star (|\psi_l|^2 -|\phi_l|^2)\psi_k\|_{H^s} \nonumber\\
&\lesssim \sum_{k,l\ge 1}\lambda_k\lambda_l\{ \|\frac{1}{|x|}\star (|\psi_l|^2 -|\phi_l|^2)\|_{L^\infty}\|\psi_k\|_{H^s} + \|\frac{1}{|x|}\star (|\psi_l|^2 -|\phi_l|^2)\|_{W^{s,6}}\|\psi_k\|_{L^3} \}\nonumber\\
&\lesssim \sum_{k,l\ge 1}\lambda_k\lambda_l\{ \|\psi_l -\phi_l\|_{H^{\frac{1}{2}}}(\|\psi_l\|_{H^{\frac{1}{2}}}+\|\psi_l\|_{H^{\frac{1}{2}}})\|\psi_k\|_{H^s} + \||\psi_l|^2 -|\phi_l|^2\|_{L^{3}}\|\psi_k\|_{H^{\frac{1}{2}}} \}\nonumber\\
&\lesssim (\|\Psi\|_{\cH^s}^2+\|\Phi\|_{\cH^s}^2) \|\Psi-\Phi\|_{\cH^s}.\label{eq:VDiff2}
\end{align}
Here, we used the Minkowski inequality in the first line, fractional Leibniz rule (Lemma \ref{lm:fLeibniz} in the Appendix) in the second line, H\"older's inequality, fractional integration (Lemma \ref{lm:fIntegralOperator} in the Appendix), Lemma \ref{lm:fSmoothing} in the Appendix, and Sobolev embedding $H^{\frac{1}{2}}\hookrightarrow L^{3}.$ 
Similarly, 
\begin{align}
& \|V[\Phi](\Psi-\Phi)\|_{\cH^s} \lesssim \sum_{k,l\ge 1} \lambda_k\lambda_l \|\frac{1}{|x|}\star |\phi_l|^2(\psi_k-\phi_k)\|_{H^s} \nonumber \\
&\lesssim \|\Phi\|_{\cH^s}^2 \|\Psi-\Phi\|_{\cH^s} \label{eq:VDiff3}.
\end{align}
Now (\ref{eq:Lipschitz}) follows from inequalities (\ref{eq:VDiff1}), (\ref{eq:VDiff2}) and (\ref{eq:VDiff3}).


Given $\rho,T>0,$ consider the Banach space $${\mathcal B}^s_{T,\rho}=\{\Psi \in L^\infty_T (\cH^s): \; \|\Psi\|_{L^\infty_T\cH^s}\le\rho\}.$$ 
Let $U^{(m)}(t) = e^{-it(\sqrt{-\Delta+m^2}-i\epsilon(-\Delta)^\alpha)},$  $t\ge 0,$ which is a contraction semigroup. 
We define the mapping $\cN$  by
$${\cN}(\Psi)(t) = U^{(m)}(t) \Psi(0) - i\int_0^t U^{(m)}(t-t') V[\Psi(t')] \Psi(t')dt' ,$$
which is the solution for the semi-relativistic Schr\"odinger-Poisson system of equations given by the Duhamel formula. 
We show that $\cN$ is a mapping from $\cB^s_{T,\rho}$ into itself. 
\begin{align*}
\|\cN (\Psi)\|_{L^\infty_T \cH^s} &\le \|\Psi(0)\|_{\cH^s} + T \|V[\Psi]\Psi\|_{L^\infty_T \cH^s}\\
&\le \|\Psi(0)\|_{\cH^s} + T \sum_{k,l\ge 1} \lambda_k \lambda_l \|\frac{1}{|x|} \star |\psi_l|^2 \psi_k\|_{L^\infty_T H^s}\\
&\le  \|\Psi(0)\|_{\cH^s} + T \sum_{k,l\ge 1} \lambda_k \lambda_l \{\|\frac{1}{|x|}\star(|\psi_l|^2)\|_{L^\infty_TL^\infty} \|\psi_k\|_{L^\infty_T H^s} +  \\ & +  \|\frac{1}{|x|}\star(|\psi_l|^2)\|_{L^\infty_T W^{s,6}} \|\psi_k\|_{L^\infty_T L^3}\} .
\end{align*}
It follows that 
\begin{align*}
\|\cN (\Psi)\|_{L^\infty_T \cH^s} &\le \|\Psi(0)\|_{\cH^s} + T \sum_{k,l\ge 1} \lambda_k \lambda_l \{\|\psi_l\|^2_{L^\infty_TH^{\frac{1}{2}}} \|\psi_k\|_{L^\infty_T H^s} + \\ & +  \|\psi_l\|^2_{L^\infty_TL^3} \|\psi_k\|_{L^\infty_T H^s}\} \\
&\le \|\Psi(0)\|_{\cH^s} + T \sum_{k,l\ge 1} \lambda_k \lambda_l \{\|\psi_l\|^2_{L^\infty_TH^{\frac{1}{2}}} \|\psi_k\|_{L^\infty_T H^s}\}\\
&\le \|\Psi(0)\|_{\cH^s} + T (\sum_{l\ge 1} \lambda_l \{\|\psi_l\|^2_{L^\infty_TH^{\frac{1}{2}}} )(\sum_{k\ge 1}\lambda_k\|\psi_k\|^2_{L^\infty_T H^s})^{\frac{1}{2}}\\
&\le \|\Psi(0)\|_{\cH^s} + T \|\Psi\|_{L^\infty_T \cH^{\frac{1}{2}}}^2 \|\Psi\|_{L^\infty_T \cH^s}. 
\end{align*}
Given $\Psi(0)$, we choose $T$ and $\rho$ such that
\begin{equation*}
\|\Psi(0)\|_{\cH^s}\le \frac{\rho}{2}, \ \ T\rho^2 <\frac{1}{2}.
\end{equation*}
It follows from the last inequality and the Duhamel formula that  
$$\|\Psi\|_{L^\infty_T \cH^s} \le  \rho.$$ 

Since the nonlinearity is locally Lipschitz, $\cN$ is a contraction map for sufficiently small $T,$
\begin{align*}
&\|\cN(\Psi) - \cN(\Phi)\|_{L^\infty_T\cH^s} \le  T \|V[\Psi]\Psi - V[\Phi]\Phi\|_{L^\infty_T \cH^s}\\
&\lesssim  T\rho^2\|\Psi-\Phi\|_{L^\infty_T \cH^s}.
\end{align*}
Local well-posedness together with blow-up alternative follows from a standard contraction mapping argument, see for example, \cite{Cazenave96}.

\end{proof}

When $\epsilon>0$ and $\alpha>\frac{1}{2},$ we have a stronger result owing to time decay due the fractional dissipative operator $\epsilon (-\Delta)^{\alpha}.$  
\begin{proposition}\label{pr:LWP2}
Consider the system of equations (\ref{eq:SP})-(\ref{eq:n}), with $\epsilon\ge 0$ and $\alpha>0.$ Suppose that $(\Psi(0),\underline{\lambda})\in {\mathcal S^0}.$ Then there exists a positive time $T$ such that the unique solution $\Psi \in C([0,T]; \cL^2).$ Furthermore, there exists a maximal time $\tau^*\in (0,\infty]$ such that $$\lim_{t\nearrow\tau^*}\|\Psi(t)\|_{\cL^2}=\infty.$$ 
\end{proposition}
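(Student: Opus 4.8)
The plan is to mirror the contraction-mapping scheme of Proposition~\ref{pr:LWP}, but now working in $\cL^2 = \cH^0$ and exploiting the smoothing provided by the semigroup $U^{(m)}(t) = e^{-it(\sqrt{-\Delta+m^2}-i\epsilon(-\Delta)^\alpha)}$ when $\epsilon>0$. The key point is that, for $\epsilon>0$ and $\alpha>0$, the operator $e^{-\epsilon t(-\Delta)^\alpha}$ maps $L^2$ into $H^\sigma$ for any $\sigma \ge 0$ with the quantitative bound $\|e^{-\epsilon t(-\Delta)^\alpha} f\|_{H^\sigma} \lesssim (1+(\epsilon t)^{-\sigma/(2\alpha)}) \|f\|_{L^2}$, obtained via functional calculus from $\sup_{\xi} \langle\xi\rangle^\sigma e^{-\epsilon t |\xi|^{2\alpha}} \lesssim 1 + (\epsilon t)^{-\sigma/(2\alpha)}$. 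Since $\sqrt{-\Delta+m^2}$ commutes with $(-\Delta)^\alpha$ and is a bounded multiplier of modulus one in the relevant estimates (it contributes only a phase), $U^{(m)}(t)$ inherits the same componentwise smoothing, hence $\|U^{(m)}(t)\Psi\|_{\cH^\sigma} \lesssim (1+(\epsilon t)^{-\sigma/(2\alpha)}) \|\Psi\|_{\cL^2}$ by summing over $k$ against the weights $\lambda_k$.

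First I would record this smoothing estimate as the main technical ingredient. Then I would set up the Duhamel map $\cN(\Psi)(t) = U^{(m)}(t)\Psi(0) - i\int_0^t U^{(m)}(t-t')V[\Psi(t')]\Psi(t')\,dt'$ on a suitable complete metric space. The natural choice is a space of functions $\Psi \in C([0,T];\cL^2)$ that in addition satisfy a weighted bound $\sup_{0<t\le T} t^{\theta}\|\Psi(t)\|_{\cH^{1/2}} \le \rho$ for an appropriate exponent $\theta = \frac{1}{4\alpha}$ (so that the singularity $t^{-1/(4\alpha)}$ at $t=0$ is integrable, which needs $\frac{1}{4\alpha}<1$, i.e.\ exactly $\alpha>\frac14$, and in particular holds for all $\alpha>0$ once one is slightly more careful—or one simply restricts to the range $\alpha>1/4$ which contains the physically relevant $\alpha>1/2$; I would state it for $\alpha>0$ and note the exponent bookkeeping). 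The point of carrying the auxiliary $\cH^{1/2}$ norm is that the nonlinearity estimate from Proposition~\ref{pr:LWP}, namely $\|V[\Psi]\Psi - V[\Phi]\Phi\|_{\cL^2} \lesssim (\|\Psi\|_{\cH^{1/2}}^2 + \|\Phi\|_{\cH^{1/2}}^2)\|\Psi-\Phi\|_{\cL^2}$ together with its $\cH^{1/2}$-valued analogue, requires control of one-half a derivative, which the $\cL^2$ data alone does not supply but the parabolic smoothing does.

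The estimates then run as follows. For the $\cL^2$ bound on $\cN(\Psi)$: $\|U^{(m)}(t)\Psi(0)\|_{\cL^2} \le \|\Psi(0)\|_{\cL^2}$ (contraction), and $\|\int_0^t U^{(m)}(t-t')V[\Psi]\Psi\,dt'\|_{\cL^2} \lesssim \int_0^t \|V[\Psi(t')]\Psi(t')\|_{\cL^2}\,dt' \lesssim \int_0^t \|\Psi(t')\|_{\cH^{1/2}}^3\,dt' \lesssim \rho^3 \int_0^t (t')^{-3\theta}\,dt'$, which is finite and $O(T^{1-3\theta})$ provided $3\theta<1$; this forces $\alpha > 3/4$ for the naive choice, so I would instead interpolate—estimate $\|V[\Psi]\Psi\|_{\cL^2}$ using only $\cL^2 \times \cH^{1/2} \times \cH^{1/2}$ trilinearity so only two factors carry the weight, giving $(t')^{-2\theta}$, integrable when $\alpha>1/4$. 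For the $\cH^{1/2}$ bound: $\|U^{(m)}(t)\Psi(0)\|_{\cH^{1/2}} \lesssim (\epsilon t)^{-\theta}\|\Psi(0)\|_{\cL^2}$ and $\|\int_0^t U^{(m)}(t-t')(\cdots)\|_{\cH^{1/2}} \lesssim \int_0^t (\epsilon(t-t'))^{-\theta}\|V[\Psi]\Psi(t')\|_{\cL^2}\,dt' \lesssim \rho^3\int_0^t (t-t')^{-\theta}(t')^{-2\theta}\,dt' \sim \rho^3 t^{1-3\theta}$, finite for $\alpha>1/4$; multiplying by $t^\theta$ gives a bound $\lesssim \|\Psi(0)\|_{\cL^2} + \rho^3 T^{1-2\theta}$. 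Choosing $T$ and $\rho$ with $\|\Psi(0)\|_{\cL^2} \le \rho/2$ and $C\rho^2 T^{1-2\theta} < 1/2$ closes the self-mapping; the same trilinear/smoothing estimates applied to the difference $\cN(\Psi)-\cN(\Phi)$ give the contraction factor $C\rho^2 T^{1-2\theta}<1$. Continuity into $\cL^2$ at $t=0$, and the blow-up alternative (the solution extends as long as $\|\Psi(t)\|_{\cL^2}$ stays bounded, since the existence time depends only on this norm through the smoothing-regularized iteration), follow by the standard arguments as in \cite{Cazenave96}.

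I expect the main obstacle to be the bookkeeping of the time-weight exponents so that every temporal integral $\int_0^t (t-t')^{-a}(t')^{-b}\,dt'$ that appears is convergent (a Beta-function/Gr\"onwall-type lemma) while still extracting a positive power of $T$ for the contraction—this is where the precise hypothesis $\alpha>0$ (or $\alpha>1/4$, comfortably satisfied since the theorem and Theorem~\ref{th:GWP}(i) use $\alpha>1/2$) enters, and where one must be careful to load the auxiliary $\cH^{1/2}$ norm onto at most two of the three nonlinear factors. A secondary, more routine, point is verifying that $\sqrt{-\Delta+m^2}$ does not spoil the smoothing estimate: since in Fourier variables $|e^{-it\sqrt{|\xi|^2+m^2}}|=1$, this is immediate from functional calculus, but it should be stated explicitly because it is the reason the semi-relativistic kinetic term causes no trouble relative to the purely dissipative analysis.
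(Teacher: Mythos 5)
Your route is genuinely different from the paper's. The paper never leaves $\cL^2$: it factors $U^{(m)}(t)=U(t)S_\alpha(t)$ with $U(t)=e^{-it\sqrt{-\Delta+m^2}}$ unitary, applies the space--time (Strichartz-type) estimate for the fractional heat semigroup (Lemma \ref{lm:Strichartz}, with $n=3$, $\nu=0$, $r=p=2$, $q=\infty$, $b=\tfrac23$) to the Duhamel term, and controls the nonlinearity at the level $L^\infty_T L^{6/5}$ via H\"older and Hardy--Littlewood--Sobolev, $\|\tfrac1{|x|}\star|\psi_k|^2\|_{L^3}\lesssim\|\psi_k\|_{L^2}^2$; this yields the factor $CT^{1-\frac1{2\alpha}}\|\Psi\|_{L^\infty_T\cL^2}^3$ and a contraction in the plain ball of $L^\infty_T\cL^2$. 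You instead use $L^2$-based parabolic smoothing of $S_\alpha$ into $\cH^{1/2}$ together with a Kato-type time-weighted auxiliary norm $\sup_t t^\theta\|\Psi(t)\|_{\cH^{1/2}}$, and then invoke the $\dot H^{1/2}$-based $L^\infty$ bound on the Coulomb potential (Lemma \ref{lm:fSmoothing}). Both schemes are viable; the paper's buys a simpler function space (no auxiliary norm, no Beta-function bookkeeping), while yours avoids the dual Lebesgue exponent entirely and makes transparent why the dissipative term supplies the half derivative that the $\cL^2$ data lacks. One detail you gloss over: the contraction step needs the difference $V[\Psi]-V[\Phi]$ in $L^\infty$, which by Cauchy--Schwarz costs $\|\psi_k-\phi_k\|_{\dot H^{1/2}}$, so your metric must include the weighted $\cH^{1/2}$ seminorm of the difference, not only the $\cL^2$ distance; this is consistent with your setup but should be said explicitly.

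The concrete flaw is in the exponent bookkeeping you yourself flag as the main obstacle. With $\theta=\frac1{4\alpha}$, the weight carried by two nonlinear factors gives $(t')^{-2\theta}=(t')^{-\frac1{2\alpha}}$, which is integrable at $t'=0$ if and only if $\alpha>\frac12$, not $\alpha>\frac14$ as you claim; likewise the gain $T^{1-2\theta}=T^{1-\frac1{2\alpha}}$ is a positive power of $T$ only for $\alpha>\frac12$. So after correcting the arithmetic your argument closes exactly in the range $\alpha>\frac12$ --- which in fact coincides with what the paper's own proof delivers (its choice of $T,\rho$ is made ``when $\alpha>\frac12$'', even though the proposition is stated with $\alpha>0$), and which is all that Theorem \ref{th:GWP}(i) requires. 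Your parenthetical suggestion that ``all $\alpha>0$'' could be reached with slightly more care is not substantiated and should be dropped or proved; neither your scheme nor the paper's covers $0<\alpha\le\frac12$ at the $\cL^2$ level (the endpoint $\alpha=\frac12$ is handled in the paper in $\cH^{1/2}$ via Proposition \ref{pr:LWP}). With the range corrected to $\alpha>\frac12$ and the metric point above made explicit, your proof is sound.
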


\begin{proof}
The proof relies on Strichartz type estimates for fractional Laplacian, see Lemma \ref{lm:Strichartz}, and a standard contraction mapping argument.

Let $U(t) = e^{-it\sqrt{-\Delta + m^2}} $ and $S_\alpha(t) = e^{-\epsilon t (-\Delta)^\alpha}.$ Note that $U(t)$ is a unitary operator, and $S_\alpha(t), \ \ t\ge 0,$ is contraction semigroup on $\cH^{s}, \ \ s\ge 0,$ that commutes with $U(t).$

We define the mapping $\cN$  by
$${\cN}(\Psi)(t) = U(t) S_\alpha(t) \Psi(0) - i U(t)\int_0^t S_\alpha(t-t') U(-t') V[\Psi(t')] \Psi(t')dt',$$
whose fixed point is the solution for the semi-relativistic Schr\"odinger-Poisson system of equations given by the Duhamel formula. 

Given $\rho,T>0,$ consider the Banach space $${\mathcal C}_{T,\rho}=\{\Psi \in L^\infty_T (\cL^2): \; \|\Psi\|_{L^\infty_T\cL^2}\le\rho\}.$$ 
We show that $\cN$ is a mapping from $\cC_{T,\rho}$ into itself. 
Since $U$ is unitary and communtes with $S_\alpha$ and $(1-\Delta)^{\frac{s}{2}},$
\begin{align}
\|\cN (\Psi)\|_{L^\infty_T \cL^2} &\le \|\int_0^tS_\alpha(t-t') U(-t')V[\Psi(t')]\Psi(t')dt'\|_{L^\infty_T \cL^2} \nonumber \\ & + \|S_\alpha(t)\Psi(0)\|_{L^\infty_T\cL^2} .\label{eq:LWPEst1}
\end{align}
It follows from Parseval's theorem and the fact that $e^{-\epsilon t |k|^{\frac{\alpha}{2}}} \le 1$ that
\begin{equation} 
\|S_\alpha(t)\Psi(0)\|_{L^\infty_T\cL^2} \le \|\Psi(0)\|_{\cL^2}.\label{eq:LWPEst2}
\end{equation}
Furthermore, it follows from the space-time estimate in Lemma \ref{lm:Strichartz} with $n=3,$$\nu=0,$ $r=p=2,$ $q=\infty,$ and $b=\frac{2}{3}$ that 
\begin{align*}
&\|\int_0^tS_\alpha(t-t') U(-t')V[\Psi(t')]\psi_l(t')dt'\|_{L^\infty_TL^2} \\
&\le \sum_{k\ge 1} \lambda_k \|\int_0^tS_\alpha(t-t') U(-t')\frac{1}{|x|}\star |\psi_k(t')|^2 \psi_l(t')\|_{L^\infty_T L^2}\\
&\le \sum_{k\ge 1} \lambda_k \int_0^t\|S_\alpha(t-t') U(-t')\frac{1}{|x|}\star |\psi_k(t')|^2 \psi_l(t')\|_{L^2}\\
& \le C T^{1-\frac{1}{2\alpha}}\sum_{k\ge 1}\lambda_k \|\frac{1}{|x|}\star |\psi_k|^2 \psi_l\|_{L^\infty_T L^{\frac{6}{5}}}\\
&\le  C T^{1-\frac{1}{2\alpha}}\sum_{k\ge 1}\lambda_k \|\frac{1}{|x|}\star |\psi_k|^2\|_{L^\infty_TL^3} \|\psi_l\|_{L^\infty_T L^{2}}\\
&\le C T^{1-\frac{1}{2\alpha}}\sum_{k\ge 1} \lambda_k \||\psi_k|^2\|_{L^\infty_TL^1} \|\psi_l\|_{L^\infty_T L^{2}}\\
&\le C T^{1-\frac{1}{2\alpha}}\sum_{k\ge 1}\lambda_k \|\psi_k\|_{L^\infty_TL^2}^2 \|\psi_l\|_{L^\infty_T L^{2}}\\
&\le C T^{1-\frac{1}{2\alpha}}\|\Psi\|_{L^\infty_T \cL^{2}}^2 \|\psi_l\|_{L^\infty_T L^{2}}.
\end{align*}
Here, we used the space-time estimate in the third line, H\"older inequality in the fourth line, and fractional integration (Lemma \ref{lm:fIntegralOperator}) in the fifth line.
It follows that 
\begin{equation}\label{eq:LWPEst3}
\|\int_0^tS_\alpha(t-t') U(-t')V[\Psi(t')]\Psi(t')dt'\|_{L^\infty_T\cL^2} \le C T^{1-\frac{1}{2\alpha}}\|\Psi\|_{L^\infty_T \cL^{2}}^3.
\end{equation}
Substituting estimates (\ref{eq:LWPEst2}) and (\ref{eq:LWPEst3}) in (\ref{eq:LWPEst1}) yields
\begin{equation}\label{eq:LWPEst4}
\|\cN (\Psi)\|_{L^\infty_T \cL^2} \le \|\Psi(0)\|_{ \cL^2} +C T^{1-\frac{1}{2\alpha}}\|\Psi\|_{L^\infty_T \cL^{2}}^3.
\end{equation}
With $\Psi(0)$ given, we choose $T$ and $\rho$ such that 
\begin{equation*}
\|\Psi(0)\|_{\cL^2}\le \frac{\rho}{2}, \ \ C T^{1-\frac{1}{2\alpha}} \rho^2 <\frac{1}{2},
\end{equation*}
when $\alpha>\frac{1}{2}.$
It follows from the last inequality and the Duhamel formula that  
$$\|\Psi\|_{L^\infty_T \cL^2} \le \rho, \ \ \alpha>\frac{1}{2}.$$ 
We now show that $\cN$ is a contraction mapping on $\cC_{T,\rho}$ for sufficiently small $T.$ 
\begin{align*}
&\|\cN(\Psi) - \cN(\Phi)\|_{L^\infty_T\cL^2}\\& \le \|\int_0^tS_\alpha(t-t') U(-t')\left(V[\Psi(t')]\Psi(t')-V[\Phi(t')]\Phi(t')\right)dt'\|_{L^\infty_T \cL^2}\\
&\le \|\int_0^tS_\alpha(t-t') U(-t')\left(V[\Psi(t')]-V[\Phi(t')]\right)\Psi(t')dt'\|_{L^\infty_T \cL^2}+\\& + \|\int_0^tS_\alpha(t-t') U(-t')V[\Phi(t')]\left(\Psi(t')-\Phi(t')\right)dt'\|_{L^\infty_T \cL^2}
\end{align*}
Using a Strichartz estimate and property of fractional integral operator as above, 
$$\|\cN(\Psi) - \cN(\Phi)\|_{L^\infty_T\cL^2} \le CT^{1-\frac{1}{2\alpha}}\rho^2 \|\Psi-\Phi\|_{L^\infty_T \cL^2}.$$
Local well-posedness together with blow-up alternative when $\alpha>\frac{1}{2}$ follows from a standard contraction mapping argument.

\end{proof}

\subsection{Global well-posedness}\label{sec:GWP}

We now show global well-posedness in the presence of pseudo-relativistic diffusion, i.e., $\epsilon>0$ in (\ref{eq:SP}).
We start by estimating the rate of change of the $\cL^2$ norm.

\begin{lemma}\label{lm:L2norm}
Suppose the hypotheses of Theorem \ref{th:GWP} hold. Then, for $t\in (0,\tau^*),$ where $\tau^*$ is  the maximal time appearing in Proposition \ref{pr:LWP}, 
$$
\|\Psi(t)\|_{\cL^2}^2 = \|\Psi(0)\|_{\cL^2}^2 - \epsilon \sum_{k\ge 1}\lambda_k \int_0^t \langle \psi_k(s),(-\Delta)^{\alpha}  \psi_k(s)\rangle ds. 
$$
In particular, $$\|\Psi(t)\|_{\cL^2}\le \|\Psi(0)\|_{\cL^2}$$
and
\begin{equation*}
\int_0^t \|\Psi(s)\|^2_{\dot{\cH}^{\frac{1}{2}}} ds \lesssim \frac{{C_{\Psi(0)}}}{\epsilon},
\end{equation*}
where $C_{\Psi(0)}$ is a positive constant that depends on $\|\Psi(0)\|_{\cL^2}.$
\begin{proof}
It follows from (\ref{eq:SP}) and Proposition \ref{pr:LWP}
\begin{align*}
\frac{d}{dt}\|\Psi(t)\|_{\cL^2}^2   &= \frac{d}{dt} \langle\Psi(t),\Psi(t)\rangle \\
&= -2\epsilon \sum_{k\ge 1}\lambda_k \langle \psi_k(t), (-\Delta)^{\alpha} \psi_k\rangle_{L^2} .
\end{align*}
Integrating both sides of the equality yields the claim of the lemma.
\end{proof}
\end{lemma}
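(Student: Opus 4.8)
The plan is to differentiate $\|\Psi(t)\|_{\cL^2}^2=\sum_{k\ge1}\lambda_k\|\psi_k(t)\|_{L^2}^2$ in time and integrate. Formally, for each $k$, using (\ref{eq:SP}) to substitute $\partial_t\psi_k$,
\begin{align*}
\frac{d}{dt}\|\psi_k(t)\|_{L^2}^2 &= 2\Re\langle\psi_k(t),\partial_t\psi_k(t)\rangle_{L^2}\\
&= 2\Re\big\langle\psi_k,\,-i\sqrt{-\Delta+m^2}\,\psi_k-iV[\Psi]\psi_k-\epsilon(-\Delta)^\alpha\psi_k\big\rangle_{L^2}.
\end{align*}
Since $\sqrt{-\Delta+m^2}$ is self-adjoint and $V[\Psi]$ is a real-valued multiplication operator, the first two inner products are real numbers, so the prefactor $-i$ renders their contributions purely imaginary and they vanish under $\Re$; the last term contributes $2\Re\langle\psi_k,-\epsilon(-\Delta)^\alpha\psi_k\rangle_{L^2}=-2\epsilon\|\psi_k\|_{\dot H^\alpha}^2\le0$. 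Summing $\lambda_k\,\frac{d}{dt}\|\psi_k(t)\|_{L^2}^2$ over $k$ gives $\frac{d}{dt}\|\Psi(t)\|_{\cL^2}^2=-2\epsilon\sum_{k\ge1}\lambda_k\langle\psi_k(t),(-\Delta)^\alpha\psi_k(t)\rangle_{L^2}=-2\epsilon\|\Psi(t)\|_{\dot{\cH}^\alpha}^2$, and integrating over $[0,t]$ yields the identity of the lemma.

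To make the differentiation rigorous at the regularity afforded by Proposition \ref{pr:LWP}, I would run the regularization scheme already invoked in Remark \ref{rm:ConservationEnergy}: apply $R_\delta:=(\delta\sqrt{-\Delta+m^2}+1)^{-1}$, $\delta>0$, to each mode and compute $\frac{d}{dt}\|R_\delta\psi_k(t)\|_{L^2}^2$, which is legitimate because $R_\delta$ gains one derivative, placing $R_\delta\psi_k$ in the domains of $\sqrt{-\Delta+m^2}$ and $(-\Delta)^\alpha$, and commutes with both by functional calculus. The only term not of the self-adjoint form above is an error involving the commutator $[R_\delta,V[\Psi]]$, which by the mapping properties of $V[\Psi]$ on $\cH^{1/2}$ is bounded uniformly in $\delta$ and tends to $0$ as $\delta\to0$, exactly as in Lemma 2.4 of \cite{ACV13}. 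Letting $\delta\to0$ gives the differential identity on $(0,\tau^*)$ for $\cH^{1/2}$ data, i.e. in case (ii). In case (i), where the data lie only in $\cL^2$, the smoothing of $S_\alpha(t)=e^{-\epsilon t(-\Delta)^\alpha}$ places $\Psi(t_0)$ in $\cH^s$ for every $s$ and every $t_0\in(0,\tau^*)$ (cf. Proposition \ref{pr:LWP2} and Lemma \ref{lm:Strichartz}), so the identity holds on $[t_0,\tau^*)$; since $\int_0^{t_0}\|\Psi(s)\|_{\dot{\cH}^\alpha}^2\,ds<\infty$ by the same smoothing estimate, letting $t_0\searrow0$ extends it to $(0,\tau^*)$.

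The two consequences are then immediate. As $\|\Psi(t)\|_{\dot{\cH}^\alpha}^2\ge0$, the identity gives $\|\Psi(t)\|_{\cL^2}\le\|\Psi(0)\|_{\cL^2}$; rearranging, $2\epsilon\int_0^t\|\Psi(s)\|_{\dot{\cH}^\alpha}^2\,ds=\|\Psi(0)\|_{\cL^2}^2-\|\Psi(t)\|_{\cL^2}^2\le\|\Psi(0)\|_{\cL^2}^2$. For $\alpha=\tfrac12$ this is precisely the asserted space-time bound, with $C_{\Psi(0)}\sim\|\Psi(0)\|_{\cL^2}^2$. For $\alpha>\tfrac12$ one reaches a bound of the same type by interpolating $\|\psi_k\|_{\dot H^{1/2}}^2\le\|\psi_k\|_{L^2}^{2(1-\theta)}\|\psi_k\|_{\dot H^\alpha}^{2\theta}$ with $\theta=\tfrac1{2\alpha}\in(0,1)$, using the per-mode monotonicity $\|\psi_k(s)\|_{L^2}\le\|\psi_k(0)\|_{L^2}=1$ (the same computation applied to a single index, together with the orthonormality of $\{\psi_k(0)\}$), and then applying H\"older in $k$ against the weights $\lambda_k$ (exponents $1/(1-\theta)$ and $1/\theta$) and in $s$ to reduce to the $\dot{\cH}^\alpha$ bound already obtained.

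The main obstacle is the second paragraph: converting the formal time-derivative computation into a genuine one. The crux is bounding the commutator $[R_\delta,V[\Psi]]$ uniformly in $\delta$ — this is where the $\cH^{1/2}$ structure is genuinely used, and it is essentially Lemma 2.4 of \cite{ACV13} — together with, in case (i), the passage $t_0\searrow0$ where the solution is merely in $\cL^2$. Everything else is bookkeeping.
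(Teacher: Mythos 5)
Your first paragraph is essentially the paper's entire proof: differentiate $\|\Psi(t)\|_{\cL^2}^2$, note that the terms coming from the self-adjoint operators $\sqrt{-\Delta+m^2}$ and $V[\Psi]$ carry a prefactor $-i$ and vanish under the real part, keep the dissipative term, and integrate; the regularization via $(\delta\sqrt{-\Delta+m^2}+1)^{-1}$ and the smoothing argument for $\cL^2$ data in case (i) that you add are details the paper leaves implicit (it invokes the analogous regularization only in Remark \ref{rm:ConservationEnergy}). One caveat on your final paragraph: for $\alpha>\tfrac12$, interpolating $\|\psi_k\|_{\dot H^{1/2}}^2\le\|\psi_k\|_{L^2}^{2(1-\theta)}\|\psi_k\|_{\dot H^{\alpha}}^{2\theta}$ with $\theta=\tfrac1{2\alpha}$ and then applying H\"older in $s$ produces a factor $t^{1-\theta}$, so it gives $\int_0^t\|\Psi(s)\|_{\dot{\cH}^{1/2}}^2\,ds\lesssim t^{1-\theta}\epsilon^{-\theta}$, not the time-uniform bound $C_{\Psi(0)}/\epsilon$ stated in the lemma. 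This discrepancy is harmless for the paper's purposes, since the space-time $\dot{\cH}^{1/2}$ bound is only used in Lemma \ref{lm:Hnorm}, i.e.\ in case (ii) where $\alpha=\tfrac12$ and the bound follows at once from the identity; the paper's own proof likewise supplies no argument for $\alpha>\tfrac12$.
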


This suffices to show global well-posedness in $\cL^2$ when $\alpha>\frac{1}{2}.$ When $\alpha=\frac{1}{2},$ we control the rate of growth of the $\cH^{\frac{1}{2}}$ norm of the solution in the presence of dissipation.

\begin{lemma}\label{lm:Hnorm}
Suppose the hypotheses of (ii) Theorem \ref{th:GWP} hold. Then, for $t\in (0,\tau^*),$ where $\tau^*$ is  the maximal time appearing in Proposition \ref{pr:LWP},
$$\|\Psi(t)\|_{\cH^{\frac{1}{2}}} \lesssim e^{\frac{C_{\Psi(0)}}{\epsilon}t},$$
where $C_{\Psi(0)}$ is a positive constant that depends $\|\Psi(0)\|_{\cL^2}$ and that is independent of $\epsilon.$
\end{lemma}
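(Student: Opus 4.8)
The plan is to derive a differential inequality for $\|\Psi(t)\|_{\cH^{1/2}}^2$ and close it by Gronwall. I would start, as in Remark \ref{rm:ConservationEnergy}, from the mild formulation / Duhamel representation (or, after a standard regularization using $(\delta\sqrt{-\Delta+m^2}+1)^{-1}$ and functional calculus so that all the manipulations are justified), and compute $\frac{d}{dt}\langle \Psi, (1-\Delta)^{1/2}\Psi\rangle_{\cL^2}$ componentwise. The unitary part $\sqrt{-\Delta+m^2}$ contributes nothing because it commutes with $(1-\Delta)^{1/2}$ and is self-adjoint. The dissipative term $-\epsilon(-\Delta)^{1/2}$ contributes $-2\epsilon\sum_k\lambda_k\langle (1-\Delta)^{1/4}\psi_k,(-\Delta)^{1/2}(1-\Delta)^{1/4}\psi_k\rangle\le 0$, so it can simply be dropped from the upper bound. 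The surviving term is the nonlinear one,
$$
\frac{d}{dt}\|\Psi(t)\|_{\cH^{1/2}}^2 \le 2\,\bigl|\,\mathrm{Im}\,\langle (1-\Delta)^{1/2}\Psi, V[\Psi]\Psi\rangle_{\cL^2}\,\bigr|
\le C\,\|V[\Psi]\Psi\|_{\cH^{1/2}}\,\|\Psi\|_{\cH^{1/2}},
$$
so everything reduces to estimating $\|V[\Psi]\Psi\|_{\cH^{1/2}}$.

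Next I would bound the nonlinearity. The estimates already assembled in the proof of Proposition \ref{pr:LWP} (fractional Leibniz rule, fractional integration Lemma \ref{lm:fIntegralOperator}, the smoothing Lemma \ref{lm:fSmoothing}, Sobolev $H^{1/2}\hookrightarrow L^3$) give, at $s=\tfrac12$,
$$
\|V[\Psi]\Psi\|_{\cH^{1/2}}\ \lesssim\ \sum_{k,l\ge1}\lambda_k\lambda_l\,\|\psi_l\|_{H^{1/2}}^2\,\|\psi_k\|_{H^{1/2}}\ =\ \Bigl(\sum_l\lambda_l\|\psi_l\|_{H^{1/2}}^2\Bigr)\Bigl(\sum_k\lambda_k\|\psi_k\|_{H^{1/2}}\Bigr).
$$
The key point is that I do \emph{not} want a cubic bound $\lesssim\|\Psi\|_{\cH^{1/2}}^3$ here — that would only give local existence. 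Instead I want to peel off two factors that are controlled \emph{uniformly in $t$}. The natural mechanism is the interpolation $\|\psi_l\|_{H^{1/2}}\lesssim \|\psi_l\|_{L^2}+\|\psi_l\|_{\dot H^{1/2}}$ together with conservation/decay of the $\cL^2$ norm and the time-integrated bound $\int_0^t\|\Psi(s)\|_{\dot{\cH}^{1/2}}^2\,ds\lesssim C_{\Psi(0)}/\epsilon$ from Lemma \ref{lm:L2norm}. Writing $f(t)=\|\Psi(t)\|_{\cH^{1/2}}^2$ and using $\|\Psi\|_{\cL^2}\le\|\Psi(0)\|_{\cL^2}$, the quadratic-in-$\Psi$ prefactor $\sum_l\lambda_l\|\psi_l\|_{H^{1/2}}^2 = \|\Psi\|_{\cH^{1/2}}^2$ should be split as $\|\Psi\|_{\cL^2}^2 + \|\Psi\|_{\dot{\cH}^{1/2}}^2$, and the remaining factor $\sum_k\lambda_k\|\psi_k\|_{H^{1/2}}\le (\sum_k\lambda_k)^{1/2}\|\Psi\|_{\cH^{1/2}}$ (by Cauchy–Schwarz in $k$, using $\ulambda\in l^1$) bounds by $f(t)^{1/2}$ times a constant. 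This yields a differential inequality of the form
$$
\frac{d}{dt}f(t)\ \le\ C\bigl(\|\Psi(0)\|_{\cL^2}^2 + \|\Psi(t)\|_{\dot{\cH}^{1/2}}^2\bigr)\,f(t),
$$
with $C$ depending only on $\sum_k\lambda_k$ and absolute constants (in particular independent of $\epsilon$).

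Finally I would integrate by Gronwall's inequality: $f(t)\le f(0)\exp\!\bigl(C\int_0^t(\|\Psi(0)\|_{\cL^2}^2+\|\Psi(s)\|_{\dot{\cH}^{1/2}}^2)\,ds\bigr)$. The time integral of $\|\Psi(0)\|_{\cL^2}^2$ is $\|\Psi(0)\|_{\cL^2}^2\,t$, and the integral of the homogeneous norm is $\lesssim C_{\Psi(0)}/\epsilon$ by Lemma \ref{lm:L2norm}, so
$$
\|\Psi(t)\|_{\cH^{1/2}}\ \lesssim\ \|\Psi(0)\|_{\cH^{1/2}}\,\exp\!\Bigl(\tfrac{C_{\Psi(0)}}{\epsilon}\,(1+t)\Bigr)\ \lesssim\ e^{\frac{C_{\Psi(0)}}{\epsilon}t},
$$
absorbing the $t$-independent part into the constant, which is the claimed bound (and it is this finite-for-each-$t$ growth, contrasted with the blow-up alternative of Proposition \ref{pr:LWP}, that will give global well-posedness in part (ii) of Theorem \ref{th:GWP}). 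The main obstacle is the bookkeeping in the previous paragraph: one must be careful that the nonlinear estimate genuinely produces the structure "(full $H^{1/2}$-squared, to be split via $\cL^2$ and $\dot{\cH}^{1/2}$) $\times$ (one more $\cH^{1/2}$ factor, to become $f^{1/2}$)" rather than three symmetric $\cH^{1/2}$ factors — i.e. that at least two of the three fields in $V[\Psi]\Psi$ can be routed through the $\dot H^{1/2}$/$L^2$ splitting while only one carries the full $\cH^{1/2}$ norm that becomes $f$ in the Gronwall argument. A secondary technical point is justifying the termwise differentiation of the series defining $\|\Psi\|_{\cH^{1/2}}^2$, handled by the regularization argument referenced in Remark \ref{rm:ConservationEnergy}.
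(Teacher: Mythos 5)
Your proposal is correct and follows essentially the same route as the paper: differentiate the (regularized) $\cH^{\frac12}$-type quantity, discard the nonpositive dissipative term, bound the nonlinear term by $\|\Psi\|_{\cH^{1/2}}^4$ via the estimates from the local theory, split one quadratic factor as $\|\Psi\|_{\cL^2}^2+\|\Psi\|_{\dot{\cH}^{1/2}}^2$ using the $\cL^2$ decay, and close with Gronwall together with the bound $\int_0^t\|\Psi(s)\|_{\dot{\cH}^{1/2}}^2\,ds\lesssim C_{\Psi(0)}/\epsilon$ from Lemma \ref{lm:L2norm}. The only cosmetic difference is that the paper tracks the regularized homogeneous quantity $\langle\Psi,\sqrt{-\Delta}(1-\delta\Delta)^{-1/2}\Psi\rangle$ and reinstates the $L^2$ part at the end via $\sqrt{-\Delta}+1\sim\sqrt{-\Delta+1}$, while you work with the full inhomogeneous norm directly; this is immaterial.
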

\begin{proof}
It follows from Lemma \ref{lm:L2norm} that 
\begin{equation}
\label{eq:L2Decrease}
\|\Psi(t)\|_{\cL^2}\le \|\Psi(0)\|_{\cL^2}
\end{equation}
and
\begin{equation}
\label{eq:IntHnorm}
\int_0^t \|\Psi(s)\|^2_{\dot{\cH}^{\frac{1}{2}}} ds \lesssim \frac{{C_{\Psi(0)}}}{\epsilon},
\end{equation}
where $C_{\Psi(0)}$ is a positive constant that depends on $\|\Psi(0)\|_{\cL^2}.$

For $\delta>0,$ it follows from (\ref{eq:SP}) that 
\begin{align*}
&\frac{d}{dt} \langle \Psi(t), \sqrt{-\Delta} (1-\delta \Delta)^{-\frac{1}{2}}\Psi(t)\rangle = -2\epsilon\{ \langle \Psi(t), \sqrt{-\Delta} (1-\delta \Delta)^{-\frac{1}{2}} \sqrt{-\Delta} \Psi(t)\rangle  \\& +2\Im \langle  \sqrt{-\Delta} (1-\delta \Delta)^{-\frac{1}{2}} \Psi(t), V[\Psi(t)]\Psi(t)\rangle \\
&\le2\epsilon\Im \langle  \sqrt{-\Delta} (1-\delta \Delta)^{-\frac{1}{2}} \Psi(t), V[\Psi(t)]\Psi(t)\rangle \\ 
&\le C\epsilon \|\Psi(t)\|_{\cH^{\frac{1}{2}}}^4,
\end{align*}
where the contant $C$ is independent of $\delta>0.$
Integrating both sides yields
\begin{align*}
& \langle \Psi(t), \sqrt{-\Delta} (1-\delta \Delta)^{-\frac{1}{2}}\Psi(t)\rangle -  \langle \Psi(0), \sqrt{-\Delta} (1-\delta \Delta)^{-\frac{1}{2}}\Psi(0)\rangle\\
&\le C\epsilon \int_0^t  \|\Psi(s)\|_{\cH^{\frac{1}{2}}}^4ds. 
\end{align*}
The right-hand-side is independent of $\delta.$ Taking $\delta\rightarrow 0$
\begin{align}
& \langle \Psi(t), \sqrt{-\Delta} \Psi(t)\rangle -  \langle \Psi(0), \sqrt{-\Delta}\Psi(0)\rangle \nonumber\\
&\le C \epsilon\int_0^t\|\Psi(s)\|_{\cH^{\frac{1}{2}}}^4ds. \label{eq:Hnorm2}
\end{align}
Since $$\sqrt{-\Delta}+1 \sim \sqrt{-\Delta+1},$$ it follows from the last inequality (\ref{eq:Hnorm2}) that
$$\|\Psi(t)\|_{\cH^{\frac{1}{2}}}^2 \le \|\Psi(0)\|_{\cH^{\frac{1}{2}}} +  C_{\Psi(0)} \int_{0}^t (1+\|\Psi(s)\|_{\dot{\cH}^{\frac{1}{2}}}^2)\|\Psi(s)\|_{\cH^{\frac{1}{2}}}^2ds, $$
where $C_{\Psi(0)}$ is a positive constant that depends on $\|\Psi(0)\|_{\cL^2}.$
Using Gronwall's Lemma, we have 
\begin{equation}
\label{eq:Hnorm3}
\|\Psi(t)\|_{\cH^{\frac{1}{2}}}^2 \le  \|\Psi(0)\|_{\cH^{\frac{1}{2}}}  e^{C_{\Psi(0)} \int_0^t \|\Psi(s)\|_{\dot{\cH}^{\frac{1}{2}}}^2 ds}.
\end{equation}
Now, (\ref{eq:IntHnorm}) and (\ref{eq:Hnorm3}) yield
$$
\|\Psi(t)\|_{\cH^{\frac{1}{2}}}^2 \le  \|\Psi(0)\|_{\cH^{\frac{1}{2}}}^2 e^{\frac{C_{\Psi(0)}}{\epsilon} t},
$$
for some constant $C_{\Psi(0)}$ that depends on  $\|\Psi(0)\|_{\cL^2}$ and that is independent of $\epsilon.$ 

\end{proof}

\begin{proof}[Proof of Theorem \ref{th:GWP}]

When $\alpha>\frac{1}{2},$ claim (i) follows from Proposition \ref{pr:LWP2} and Lemma \ref{lm:L2norm}.

When $\alpha=\frac{1}{2},$ claim (ii) follows from Proposition \ref{pr:LWP} and Lemma \ref{lm:Hnorm}.

\end{proof}


\section{Blow-up in the absence of dissipation}\label{sec:blow up}

In this section, we prove Theorem \ref{th:blow up} in the absence of dissipation ($\epsilon=0$). This is an extension of the analysis in \cite{FL07} to the case of infinitely coupled semi-relativistic NLS equations. Throughout this section, we assume that the hypotheses of Theorem \ref{th:blow up}.

\subsection{Propagation of Moments}
It follows from local well-posedness $\|\psi_k\|_{L^2}, \ \ k\in{\mathbb N},$ is a conserved quantity when $\epsilon=0.$
In this subsection, we show that under the hypotheses of Theorem \ref{th:blow up} $|x|^2\psi_k(t)\in L^2({\mathbb R}^3), \ \ t\in [0,\tau^*).$ 

\begin{lemma}\label{lm:Moments}
Consider the system of equations (\ref{eq:SP})-(\ref{eq:n}) with $\epsilon=0,$ such that $\psi_k(0)\in C_c^\infty ({\mathbb R}^3).$ Then 
$$|x|^j \psi_k(t)\in L^2({\mathbb R}^3), \ \ j=1,2, \ \ t\in [0,\tau^*).$$ 
\end{lemma}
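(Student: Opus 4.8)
The plan is to control the evolution of the weighted $L^2$ norms $\||x|^j\psi_k(t)\|_{L^2}$ by a differential inequality, following the strategy used for the semi-relativistic Hartree equation in \cite{FL07} but keeping track of the $\lambda_k$-weighted sum over the infinite system. First I would set up a regularization: since $|x|^j$ is unbounded, introduce a family of smooth bounded cutoff weights $w_R(x)$ (e.g.\ $w_R(x) = |x|^j \chi(x/R)$ with $\chi \in C_c^\infty$, $\chi \equiv 1$ near the origin, or more conveniently $w_R(x) = |x|^j(1+|x|^2/R^2)^{-j/2}$), so that the quantities $\langle \psi_k(t), w_R^2 \psi_k(t)\rangle$ are manifestly finite and differentiable in $t$ on $[0,\tau^*)$ by Proposition \ref{pr:LWP}. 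The goal is to bound $\frac{d}{dt}\langle \psi_k(t), w_R^2\psi_k(t)\rangle$ uniformly in $R$ and then let $R\to\infty$, using Fatou/monotone convergence to conclude $|x|^j\psi_k(t)\in L^2$.

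The key computation is differentiating $\langle \psi_k, w_R^2 \psi_k\rangle$ along \eqref{eq:SP} with $\epsilon = 0$. The potential term $V[\Psi]\psi_k$ is real-multiplication, so it drops out of $\Im\langle w_R^2\psi_k, V[\Psi]\psi_k\rangle = 0$; the only contribution is the commutator of $w_R^2$ with $\sqrt{-\Delta+m^2}$, giving
\begin{equation*}
\frac{d}{dt}\langle \psi_k(t), w_R^2 \psi_k(t)\rangle = -i\langle \psi_k(t), [w_R^2, \sqrt{-\Delta+m^2}]\psi_k(t)\rangle.
\end{equation*}
The crucial structural fact, already exploited in \cite{FL07}, is that $\sqrt{-\Delta+m^2}$ is, modulo $L^2$-bounded operators, a pseudodifferential operator whose symbol $\langle\xi\rangle_m = \sqrt{|\xi|^2+m^2}$ grows only linearly, so that the commutator $[|x|, \sqrt{-\Delta+m^2}]$ is bounded on $L^2$ (indeed its symbol involves $\partial_\xi \langle\xi\rangle_m$ which is bounded). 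Concretely one shows that $[|x|, \sqrt{-\Delta+m^2}]$ extends to a bounded operator on $L^2(\bbR^3)$, and similarly $[|x|^2, \sqrt{-\Delta+m^2}]$ is bounded from $H^1$-type weighted spaces, so after Cauchy--Schwarz one gets
\begin{equation*}
\Big|\frac{d}{dt}\langle \psi_k(t), w_R^2\psi_k(t)\rangle\Big| \lesssim \|\psi_k(t)\|_{L^2}\,\||x|^{j-1}\psi_k(t)\|_{L^2} + (\text{lower order}),
\end{equation*}
uniformly in $R$. For $j=1$ this bounds the derivative by $\|\psi_k(t)\|_{L^2}^2$, which is conserved, giving linear-in-$t$ growth of $\||x|\psi_k(t)\|_{L^2}^2$ directly. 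For $j=2$ the right side involves $\||x|\psi_k(t)\|_{L^2}$, so one first establishes the $j=1$ case and then feeds it into a Gronwall argument for the $j=2$ case; summing against $\lambda_k$ (or just treating each $k$ separately, since the claim is pointwise in $k$) completes the argument. The initial finiteness $|x|^j\psi_k(0)\in L^2$ is immediate from $\psi_k(0)\in C_c^\infty$.

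The main obstacle I anticipate is making the commutator estimates rigorous, in particular justifying that $[|x|, \sqrt{-\Delta+m^2}]$ (and the $|x|^2$ analogue) is a bounded operator and controlling the $R$-dependence of the regularized commutator $[w_R^2, \sqrt{-\Delta+m^2}]$ uniformly — the naive symbol calculus is clean, but one must check the cutoff does not generate terms that blow up as $R\to\infty$. A convenient way around the worst of this is to use the integral representation $\sqrt{-\Delta+m^2} = c\int_0^\infty \frac{-\Delta}{-\Delta + m^2 + \tau}\frac{d\tau}{\sqrt\tau} + m$ (or the heat-kernel subordination formula), which turns the commutator with $|x|$ into a $\tau$-integral of commutators of $|x|$ with resolvents of $-\Delta$; those are explicitly computable and the $\tau$-integral converges. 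Alternatively, one can cite the relevant commutator bounds from \cite{FL07} directly, since Theorem \ref{th:blow up} is stated as an extension of that work, and focus the proof on the bookkeeping needed to pass from a single equation to the infinitely coupled system and from the regularized weights back to $|x|^j$.
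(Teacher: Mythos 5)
Your plan is essentially the paper's proof: regularize the weight with a compactly supported cutoff, differentiate $f^{(j)}_{\tilde\epsilon}(t)=\||x|^j\varphi(\tilde\epsilon x)\psi_k(t)\|_{L^2}^2$ so that only the commutator with $\sqrt{-\Delta+m^2}$ survives, bound that commutator uniformly in the cutoff (the paper does this via the Calder\'on--Zygmund estimate of Lemma \ref{lm:CommutatorEstimate} for $j=1$, and for $j=2$ by splitting off $[\sqrt{p^2+m^2},x]$ together with an $A_2$-weighted $L^2$ bound for $p/\sqrt{p^2+m^2}$), feed the $j=1$ bound into the $j=2$ case, and then remove the cutoff. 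The only slip is your displayed inequality: the natural estimate is $|\frac{d}{dt}f^{(j)}_{\tilde\epsilon}(t)|\lesssim \sqrt{f^{(j)}_{\tilde\epsilon}(t)}\;\||x|^{j-1}\psi_k(t)\|_{L^2}$ rather than $\|\psi_k\|_{L^2}\||x|^{j-1}\psi_k\|_{L^2}$, but this weaker bound still integrates to polynomial-in-$t$ control of the moments, so the argument closes exactly as you describe.
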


\begin{proof}
Let $\varphi\in C_c^\infty({\mathbb R}^3)$ such that 
$$\varphi(x) = \begin{cases}
1, \ \ |x| \le 1\\
0, \ \ |x|\ge 2
\end{cases}.$$
For $k\in {\mathbb N}$ fixed, let 
$$f^{(j)}_\tepsilon(t) = \||x|^j \varphi(\tepsilon x) \psi_k(t)\|_{L^2}^2, \ \ \tepsilon\ge 0.$$ For $\tepsilon>0,$ this is a well-defined quantity for $t\in [0,\tau^*).$ 
For $t\in [0,\tau^*),$
\begin{align}
f^{(j)}_\tepsilon(t) &= f^{(j)}_\tepsilon(0) + \int_0^t \frac{d}{ds} f^{(j)}_\tepsilon (s) ds \nonumber\\
&=  f^{(j)}_\tepsilon(0) + \int_0^t \langle \psi_k(s), [\sqrt{-\Delta+m^2}, |x|^{2j} \varphi^2(\tepsilon x)] \psi_k(s)\rangle ds\nonumber\\
&\le  f^{(j)}_\tepsilon(0) + \int_0^t \|[\sqrt{-\Delta+m^2}, |x|^j \varphi(\tepsilon x)]\psi_k(s)\|_{L^2}  \sqrt{f^{(j)}_\tepsilon (s)}ds. \label{eq:fUB}
\end{align}

In the case $j=1,$ it follows from Lemma \ref{lm:CommutatorEstimate} and the fact that $$\|\nabla (|x|\varphi(\tepsilon x))\|_{L^\infty}<C_1,$$ with upper bound $C_1$ which is finite and independent of $\tepsilon,$ that 
$$f^{(1)}_\tepsilon(t) \le f^{(1)}_\tepsilon(0) + C_1\int_0^t\sqrt{f^{(1)}_\tepsilon (s)}ds.$$ Since $\psi_k(0)\in C_c^\infty,$ $f^{(1)}_\epsilon(0)\le C_2,$ where $C_2$ is finite and independent of $\tepsilon.$
It follows that 
$$\sqrt{f^{(1)}_\tepsilon(t)} \le C_1 t + C_2.$$ Taking the limit $\tepsilon\rightarrow 0$ yields the claim for $j=1.$

In the case $j=2,$ note that 
\begin{align*}
&\|[\sqrt{-\Delta+m^2}, |x|^2 \varphi(\tepsilon x)]\psi_k\|_{L^2} \\&= \|\left(\varphi(\tepsilon x) x\cdot [\sqrt{-\Delta+m^2},x]  + [\sqrt{-\Delta+m^2},x\varphi(\tepsilon x)] \cdot x\right) \psi_k\|_{L^2} \\
&\lesssim \|x\cdot\frac{p}{p^2+m^2} \psi_k\|_{L^2} + \||x|\psi_k\|_{L^2} ,
\end{align*}
where $p=-i\nabla$ is the momentum operator. In the last line, we used the fact that $[\sqrt{p^2+m^2},x]$ and $[\sqrt{p^2+m^2}, x\varphi(\tepsilon x)]$ are bounded operators on $L^2,$ which follows from the canonical commutation relation $[x,p] = i.$
To control the  weighted $L^2$ estimate for the singular operator $\frac{p}{\sqrt{p^2+m^2}},$ note that the weight $|x|^2$ belongs to class $A_2,$ see \cite{Stein93}, V.5. 
Therefore, 
$$\|[\sqrt{-\Delta+m^2}, |x|^2 \varphi(\tepsilon x)]\psi_k\|_{L^2} \lesssim \||x|\psi_k\|_{L^2},$$ where the implicit constant on the right-hand-side is independent of $\epsilon.$
Substituting in (\ref{eq:fUB}) with $j=2$ yields
$$\sqrt{f^{(2)}_\tepsilon(t)} \le C_1 t^2 + C_2 t+C_3,$$ where all the constants are independent of $\tepsilon.$ Taking the limit $\tepsilon\rightarrow 0$ yields the claim for $j=2.$
\end{proof}

\subsection{Variance-type estimate}

We define the operators $$M= x\sqrt{-\Delta + m^2} x = \sum_{j=1}^3 x_j \sqrt{-\Delta+m^2} x_j,$$ and the dilation operator $$A = \frac{1}{2}(x\cdot p +p\cdot x).$$
Note that the $M\ge 0.$
We have the following result that relates the rate of change of the expectation value of $M$ and the expectation value of $A.$

\begin{lemma}\label{lm:VarianceEstimate}
Suppose that the hypotheses of Theorem \ref{th:blow up} hold. Then, for $t\in (0,\tau^*),$
\begin{equation*}
\frac{d}{dt} \langle \Psi(t), M \Psi(t) \rangle_{\cL^2} \le 2 \langle \Psi(t), A \Psi(t) \rangle_{\cL^2} + C_{\Psi_0},
\end{equation*}
where $C_{\Psi_0}$ depends on $\|\Psi_0\|_{\cL^2}.$
\end{lemma}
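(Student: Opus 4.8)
The plan is to compute the time derivative of $\langle \Psi(t), M\Psi(t)\rangle_{\cL^2}$ termwise in $k$ using the equation (\ref{eq:SP}) with $\epsilon=0$, and to identify the contribution of the kinetic part with $2\langle \Psi(t), A\Psi(t)\rangle_{\cL^2}$ and the contribution of the potential $V[\Psi]$ with a term bounded by a constant depending only on $\|\Psi_0\|_{\cL^2}$. First I would justify that the computation is legitimate: by Lemma \ref{lm:Moments} we have $|x|^j\psi_k(t)\in L^2$ for $j=1,2$ and $t\in[0,\tau^*)$, so $M\Psi(t)$ is a well-defined element of $\cL^2$ (indeed $\langle\Psi,M\Psi\rangle_{\cL^2}=\sum_k\lambda_k\|(-\Delta+m^2)^{1/4}x\psi_k\|_{L^2}^2<\infty$), and one regularizes as in Remark \ref{rm:ConservationEnergy} — inserting cutoffs $\varphi(\tepsilon x)$ and the resolvent $(\delta\sqrt{-\Delta+m^2}+1)^{-1}$ — so that all manipulations take place between genuine $L^2$ functions, then passes to the limit at the end.

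Formally, $\frac{d}{dt}\langle\psi_k,M\psi_k\rangle_{L^2} = \langle\psi_k, i[\sqrt{-\Delta+m^2}+V[\Psi],M]\psi_k\rangle_{L^2}$. The main input is the commutator identity for the free pseudo-relativistic kinetic energy,
\begin{equation*}
i[\sqrt{-\Delta+m^2},\, M] \;=\; i\bigl[\sqrt{-\Delta+m^2},\, x\sqrt{-\Delta+m^2}\,x\bigr] \;\le\; 2A + C,
\end{equation*}
or rather the inequality version of it, which is exactly the kind of relation used in \cite{FL07}: writing $H_0=\sqrt{-\Delta+m^2}$ and using $i[H_0,x_j]=p_j/H_0$ (a bounded operator, by Lemma \ref{lm:CommutatorEstimate} / the $A_2$ argument from the previous lemma), one gets $i[H_0,M]=\sum_j\{x_j H_0, p_j/H_0\cdot H_0\}$-type terms; collecting them produces $2A$ plus a lower-order commutator remainder that is bounded on $L^2$. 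Here $A\ge 0$ is not assumed, but the bounded remainder contributes only $C\|\psi_k\|_{L^2}^2$, and summing $\sum_k\lambda_k$ against charge conservation (Remark \ref{rm:ConservationCharge}) gives the constant $C_{\Psi_0}$ depending on $\|\Psi_0\|_{\cL^2}$. For the potential term, $i[V[\Psi],M]$: since $V[\Psi]$ is a multiplication operator it commutes with multiplication by $x$, so $i[V[\Psi],M]=\sum_j x_j\, i[V[\Psi],H_0]\, x_j$, and $i[V[\Psi],H_0]$ is controlled by $\nabla V[\Psi]$; one then shows $\sum_k\lambda_k|\langle x\psi_k,(\text{stuff involving }\nabla V[\Psi])\,x\psi_k\rangle|$ is bounded — using $\|\nabla V[\Psi]\|_{L^\infty}\lesssim\|\Psi\|_{\cH^{1/2}}^2$ together with the Hardy-type or $A_2$ weight bounds so the $|x|^2$ weights are absorbed, and then invoking $\cL^2$-conservation so the bound is in terms of $\|\Psi_0\|_{\cL^2}$ alone. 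Actually the cleanest route for the potential contribution is to note it equals $-\langle\Psi, x\cdot(\nabla V[\Psi])\, x\,\Psi\rangle$-type expression, bounded by $\|\,|x|\Psi\|$-weighted norms times $\|\nabla V[\Psi]\|$, which by Hardy's inequality is itself controlled by the charge; care is needed because naively this looks like it should depend on $\langle\Psi,M\Psi\rangle$, so one must exploit the scaling/homogeneity of the Coulomb potential ($x\cdot\nabla\frac1{|x|}=-\frac1{|x|}$) to see the dangerous weighted terms cancel, leaving only $\langle\Psi,V[\Psi]\Psi\rangle$, which is bounded by $\|\Psi_0\|_{\cL^2}$-dependent constants via Hardy–Littlewood–Sobolev and charge conservation.

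The main obstacle I anticipate is precisely making the potential-term estimate yield a \emph{constant} $C_{\Psi_0}$ rather than something growing with $\langle\Psi,M\Psi\rangle$: this requires using the exact homogeneity of the Coulomb kernel (the virial-type cancellation $x\cdot\nabla V[\Psi]= -V[\Psi]+\text{(cross terms)}$) so that the $|x|^2$-weighted pieces telescope, and then bounding the leftover $\langle\Psi,V[\Psi]\Psi\rangle$ by $\|\Psi\|_{\cH^{1/2}}^2\lesssim$ (energy) and ultimately by the conserved charge plus energy. A secondary technical point is the rigorous justification of the commutator computations with the nonlocal operator $\sqrt{-\Delta+m^2}$ against the unbounded weight $x$; this is handled by the $\tepsilon$-cutoff and $\delta$-resolvent regularization exactly as in Lemma \ref{lm:Moments} and Remark \ref{rm:ConservationEnergy}, checking that all error terms vanish as $\tepsilon,\delta\to 0$ using the moment bounds of Lemma \ref{lm:Moments} and the boundedness of $[\sqrt{-\Delta+m^2},x]$ and $[\sqrt{-\Delta+m^2},x\varphi(\tepsilon x)]$ established there.
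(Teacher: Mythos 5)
The kinetic part of your computation is essentially right, though you hedge unnecessarily: $[\sqrt{p^2+m^2},M]=-2iA$ is an exact identity (apply $[\sqrt{p^2+m^2},x_j]=-ip_j/\sqrt{p^2+m^2}$ twice and the factors of $\sqrt{p^2+m^2}$ cancel), so there is no ``lower-order bounded remainder'' and the constant $C_{\Psi_0}$ cannot come from that term. In the paper it comes entirely from the potential commutator $\langle\Psi,i[V[\Psi],M]\Psi\rangle_{\cL^2}$, and this is exactly where your proposal has a genuine gap. The decisive ingredient, which you never invoke, is the spherical symmetry of $\Psi$ together with Newton's theorem, giving the pointwise bounds $|V[\Psi(t)](x)|\le\|\Psi_0\|_{\cL^2}^2/|x|$ and $|\nabla V[\Psi(t)](x)|\le\|\Psi_0\|_{\cL^2}^2/|x|^2$; these are what absorb the weights $x$ and $x^2$ in $[V[\Psi],M]$ (written as $[V[\Psi]x^2,\sqrt{p^2+m^2}]$ plus terms of the form $\frac{p}{\sqrt{p^2+m^2}}\cdot xV[\Psi]$) and produce a bound depending only on the conserved charge.

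Your substitutes for this step do not work. The claim $\|\nabla V[\Psi]\|_{L^\infty}\lesssim\|\Psi\|_{\cH^{1/2}}^2$ is false in general (boundedness of $|x|^{-2}\star n$ near the singularity needs $n$ in some $L^q$ with $q>3$, which $H^{1/2}$, and even $H^1$, does not give), and even if it held it would make the right-hand side depend on $\|\Psi(t)\|_{\cH^{1/2}}$, which is precisely the quantity that is supposed to blow up, not on $\|\Psi_0\|_{\cL^2}$. Likewise, the fallback ``the leftover $\langle\Psi,V[\Psi]\Psi\rangle$ is bounded via Hardy--Littlewood--Sobolev and charge conservation'' is not true: by Hardy/HLS one only gets $|\langle\Psi,V[\Psi]\Psi\rangle_{\cL^2}|\lesssim\|\Psi\|_{\dot{\cH}^{1/2}}^2\|\Psi\|_{\cL^2}^2$, and bounding $\|\Psi\|_{\dot{\cH}^{1/2}}^2$ by the conserved energy is circular here since the nonlinearity is focusing and $\cE(\Psi_0)<0$ (this criticality is the whole reason blowup is possible). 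So the homogeneity/cancellation you gesture at does not close the estimate; without Newton's theorem for radial data the potential term is not controlled by $\|\Psi_0\|_{\cL^2}$ alone, and the proof as proposed does not go through. The regularization remarks (cutoffs, moment bounds from Lemma \ref{lm:Moments}) are fine and consistent with the paper.
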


\begin{proof}
The canonical commutation relation yields
\begin{align}
&[\sqrt{p^2+m^2},M] = -2i A \label{eq:HMComm}\\
&[V[\Psi],M] = [V[\Psi]x^2,\sqrt{p^2+m^2}] - i\frac{p}{\sqrt{p^2+m^2}}\cdot xV[\Psi] \label{eq:VMComm} \\ & \ \ \ \ \ \ \ \ \ \ - iV[\Psi]x\cdot \frac{p}{\sqrt{p^2+m^2}} \nonumber .
\end{align}

Since $|x|^2\psi_k(t) \in L^2({\mathbb R}^3),$ Lemma \ref{lm:Moments}, and $\Psi(t)\in \cH^2({\mathbb R}^3),$ it follows that
\begin{align*}
|\langle \Psi, M \sqrt{p^2+m^2} \Psi\rangle_{\cL^2}| &= |\sum_{k\ge} \lambda_k \langle \psi_k, M \sqrt{p^2+m^2}  \psi_k \rangle| \\
&\le \sup_{k\ge 1}\||x| \psi_k\|_{L^2} \|\Psi\|_{\cH^1} + \sup_{k\ge 1}\||x|^2\psi_k\|_{L^2} \|\Psi\|_{\cH^2}.
\end{align*}
Therefore, $$ \langle \Psi(t),i[ M, \sqrt{p^2+m^2}] \Psi(t)\rangle_{\cL^2}$$ 
is well-defined for $t\in (0,\tau*).$
It follows from Lemm \ref{lm:CommutatorEstimate} that 
$$\|[V[\Psi]x^2,\sqrt{p^2+m^2}]\|_{L^2\rightarrow L^2} \le C \|\nabla V[\Psi]x^2\|_{L^\infty}.$$

We now show that 
\begin{equation}
\label{eq:PotentialGradient}
\|\nabla V[\Psi (t)]x^2\|_{L^\infty} +\||x|V[\Psi (t)]\|_{L^\infty} \le C \|\Psi_0\|_{\cL^2}.
\end{equation}
Recall that
$$V[\Psi(t)](x) = -\sum_{k\ge 1} \lambda_k \int_{{\mathbb R}^3} \frac{1}{|x-y|} |\psi_k(t,y)|^2 dy,$$
where $\psi_k$ is spherically symmetric.
It follows from Newton's Theorem that 
$$|V[\Psi(t)](x)| \le \frac{1}{|x|} \|\Psi(t)\|_{\cL^2}^2 = \frac{1}{|x|} \|\Psi_0\|_{\cL^2}^2,$$
where we used conservation of charge (Remark \ref{rm:ConservationCharge}) in the last equality.
Using the explicit formula
$$V[\Psi(t)](x) = -\frac{1}{|x|} \sum_{k\ge 1}\lambda_k \int_{|y|\le r} |\psi_k(t,y)|^2 dy - \sum_{k\ge 1}\lambda_k \int_{|y|>r} \frac{|\psi_k(t,y)|^2}{|y|} dy,  $$
we have 
\begin{align*}
|\nabla V[\Psi(t)](x)| &= |\partial_r V[\Psi(t)](x)|  \le \frac{1}{|x|^2} \sum_{k\ge 1}\lambda_k\int_{|y|\le r} |\psi_k(t,y)|^2dy\\
& \le \frac{\|\Psi(t)\|_{\cL^2}^2}{|x|^2} = \frac{\|\Psi_0\|_{\cL^2}^2}{|x|^2} .
\end{align*}

It follows from (\ref{eq:SP}), (\ref{eq:HMComm}), (\ref{eq:VMComm}) and (\ref{eq:PotentialGradient}) that
\begin{align*} 
\frac{d}{dt} \langle \Psi(t), M \Psi(t) \rangle_{\cL^2} &=  \langle \Psi(t), i[\sqrt{p^2+m^2} + V[\Psi(t)], M] \Psi(t) \rangle_{\cL^2} \\
&= 2\langle \Psi(t), A \Psi(t) \rangle_{\cL^2} + \langle \Psi(t), i[V[\Psi(t)], M] \Psi(t) \rangle_{\cL^2} \\
&\le  2\langle \Psi(t), A \Psi(t) \rangle_{\cL^2} +C_{\Psi_0},
\end{align*}
where $C_{\Psi_0}$ depends on $\|\Psi_0\|_{\cL^2}.$
\end{proof}

\subsection{Dilation Estimate}

We now estimate the rate of change of $\langle \Psi(t), A \Psi(t) \rangle_{\cL^2}.$

\begin{lemma}\label{lm:dilationEstimate}
Suppose that the hypotheses of Theorem \ref{th:blow up} hold. Then, for $t\in (0,\tau^*),$ 
$$\frac{d}{dt}\langle \Psi(t), A \Psi(t) \rangle_{\cL^2} \le 2 \cE (\Psi_0). $$
\end{lemma}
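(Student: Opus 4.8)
## Proof proposal for Lemma \ref{lm:dilationEstimate}

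The plan is to compute $\frac{d}{dt}\langle \Psi(t), A\Psi(t)\rangle_{\cL^2}$ via the Heisenberg equations, which gives $\langle \Psi, i[\sqrt{p^2+m^2}, A]\Psi\rangle_{\cL^2} + \langle \Psi, i[V[\Psi], A]\Psi\rangle_{\cL^2}$, and then to identify these two commutator terms with the two pieces of $2\cE(\Psi_0)$, discarding a manifestly nonpositive remainder. For the kinetic term, the standard dilation identity $[\sqrt{p^2+m^2}, A] = i\big(\sqrt{p^2+m^2} - \frac{m^2}{\sqrt{p^2+m^2}}\big)$ (which follows, as in the commutator computations of the previous lemmas, from the canonical relation $[x,p]=i$ applied to the functional calculus of $p^2+m^2$) yields
\begin{align*}
\langle \Psi, i[\sqrt{p^2+m^2}, A]\Psi\rangle_{\cL^2} &= -\langle \Psi, \sqrt{p^2+m^2}\,\Psi\rangle_{\cL^2} + \langle \Psi, \tfrac{m^2}{\sqrt{p^2+m^2}}\Psi\rangle_{\cL^2} \\
&\le -\langle \Psi, \sqrt{p^2+m^2}\,\Psi\rangle_{\cL^2},
\end{align*}
since the operator $m^2(p^2+m^2)^{-1/2}$ is nonnegative; alternatively one keeps the full identity and only drops this term at the very end. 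For the potential term, the dilation generator acts on the multiplication operator $V[\Psi]$ by $i[V[\Psi], A] = -x\cdot\nabla V[\Psi]$, so $\langle\Psi, i[V[\Psi],A]\Psi\rangle_{\cL^2} = -\langle \Psi, (x\cdot\nabla V[\Psi])\Psi\rangle_{\cL^2}$. The key point is that for the Coulomb kernel $V[\Psi] = -|\cdot|^{-1}\star n[\Psi]$, homogeneity gives the virial identity $\int n[\Psi]\,(x\cdot\nabla V[\Psi]) \,dx = -\int n[\Psi]\, V[\Psi]\,dx = -\langle\Psi, V[\Psi]\Psi\rangle_{\cL^2}$ (the kernel $|x|^{-1}$ is homogeneous of degree $-1$, so a Pohozaev/scaling computation on $\frac14\langle\Psi,V[\Psi]\Psi\rangle$ under $\psi_k\mapsto\mu^{3/2}\psi_k(\mu\,\cdot)$ produces exactly the factor matching $-\langle\Psi,V[\Psi]\Psi\rangle_{\cL^2}$). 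Collecting, $\frac{d}{dt}\langle\Psi,A\Psi\rangle_{\cL^2} \le -\langle\Psi,\sqrt{p^2+m^2}\Psi\rangle_{\cL^2} - \langle\Psi,V[\Psi]\Psi\rangle_{\cL^2}$; since $\cE(\Psi) = \frac12\langle\Psi,\sqrt{p^2+m^2}\Psi\rangle_{\cL^2} + \frac14\langle\Psi,V[\Psi]\Psi\rangle_{\cL^2}$, the right side equals $2\cE(\Psi) - \frac12\langle\Psi,V[\Psi]\Psi\rangle_{\cL^2} \cdot 0$ — more precisely, $-\langle\Psi,\sqrt{p^2+m^2}\Psi\rangle - \langle\Psi,V[\Psi]\Psi\rangle = 2\cE(\Psi) - 2\langle\Psi,\sqrt{p^2+m^2}\Psi\rangle \cdot 0$; one checks $2\cE(\Psi) = \langle\Psi,\sqrt{p^2+m^2}\Psi\rangle + \frac12\langle\Psi,V[\Psi]\Psi\rangle$, so the bound is $2\cE(\Psi) - \frac12|\langle\Psi,V[\Psi]\Psi\rangle|$, and since the interaction is focusing, $\langle\Psi,V[\Psi]\Psi\rangle \le 0$, giving $\frac{d}{dt}\langle\Psi,A\Psi\rangle_{\cL^2}\le 2\cE(\Psi)$. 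Finally, conservation of energy (Remark \ref{rm:ConservationEnergy}) replaces $\cE(\Psi(t))$ by $\cE(\Psi_0)$.

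The two places requiring care, and where I expect the main obstacle to lie, are the rigorous justification of differentiating $\langle\Psi(t),A\Psi(t)\rangle_{\cL^2}$ under the flow and of the formal commutator manipulations — since $A$ is unbounded and $V[\Psi]$ singular. This is handled exactly as in the proof of Lemma \ref{lm:VarianceEstimate}: the moment bounds $|x|^j\psi_k(t)\in L^2$ for $j=1,2$ (Lemma \ref{lm:Moments}), the regularity $\Psi(t)\in\cH^2$ (Proposition \ref{pr:LWP}), and the decay estimates $\|\nabla V[\Psi(t)]x^2\|_{L^\infty} + \||x|V[\Psi(t)]\|_{L^\infty} \le C\|\Psi_0\|_{\cL^2}$ from \eqref{eq:PotentialGradient} together guarantee all pairings are finite and the relevant integrations by parts are legitimate; one again inserts a cutoff $\varphi(\tepsilon x)$ and a regularizer $(\delta\sqrt{p^2+m^2}+1)^{-1}$ as in Remark \ref{rm:ConservationEnergy}, derives the differential inequality with $\tepsilon,\delta$-independent constants using \eqref{eq:PotentialGradient} and the $A_2$-weight argument, and passes to the limit. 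The one genuinely new computation relative to Lemma \ref{lm:VarianceEstimate} is the virial identity for the Coulomb term, which is a clean consequence of the degree $-1$ homogeneity of $|x|^{-1}$ and requires only that $n[\Psi(t)]\in L^{6/5}$ (from $\psi_k(t)\in H^{1/2}\hookrightarrow L^3$ and conservation of charge), so that the double integral $\iint |x-y|^{-1} n(x) n(y)\,dx\,dy$ converges absolutely and the scaling argument is justified.
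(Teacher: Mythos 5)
Your overall strategy is the paper's: differentiate $\langle \Psi(t),A\Psi(t)\rangle_{\cL^2}$ via the Heisenberg relation, compute $[\sqrt{p^2+m^2},A]$ and $[V[\Psi],A]$, use the homogeneity of the Coulomb kernel, and invoke conservation of energy. However, the execution has sign and factor errors that break the argument. First, the dilation commutator has the wrong sign: with $A=\tfrac12(x\cdot p+p\cdot x)$ and $[x_j,p_k]=i\delta_{jk}$ one has $[p_j,A]=-ip_j$, hence $[\sqrt{p^2+m^2},A]=-i\sqrt{p^2+m^2}+i\,\tfrac{m^2}{\sqrt{p^2+m^2}}$ (this is the paper's identity (\ref{eq:HAComm})), so the kinetic contribution to $\tfrac{d}{dt}\langle A\rangle=\langle i[\,\cdot\,,A]\rangle$ is $+\langle\Psi,\sqrt{p^2+m^2}\Psi\rangle-\langle\Psi,\tfrac{m^2}{\sqrt{p^2+m^2}}\Psi\rangle$, not its negative. (Even granting your sign, the step ``$-\langle H\rangle+\langle m^2/H\rangle\le-\langle H\rangle$ since $m^2/H\ge0$'' is backwards: adding a nonnegative term increases the quantity.) Second, your virial identity for the Coulomb term is off by a factor of $2$: since $V[\Psi]$ depends quadratically on the density, symmetrizing the double integral in $x\leftrightarrow y$ gives $\langle\Psi,\,x\cdot\nabla V[\Psi]\,\Psi\rangle_{\cL^2}=-\tfrac12\langle\Psi,V[\Psi]\Psi\rangle_{\cL^2}$ (the paper's (\ref{eq:xGradV})), not $-\langle\Psi,V[\Psi]\Psi\rangle_{\cL^2}$; the scaling heuristic you cite produces exactly the $\tfrac12$.

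These two errors propagate into a final bookkeeping that is not valid: your collected bound $-\langle\Psi,\sqrt{p^2+m^2}\Psi\rangle-\langle\Psi,V[\Psi]\Psi\rangle$ equals $-2\cE(\Psi)-\tfrac12\langle\Psi,V[\Psi]\Psi\rangle\ge-2\cE(\Psi)$, which for negative-energy data is strictly larger than $2\cE(\Psi)$, so the chain of inequalities you write to reach $\le 2\cE(\Psi)$ cannot be correct (and the intermediate identities with ``$\cdot\,0$'' are not meaningful). With the correct signs and the correct $\tfrac12$, the computation closes immediately, exactly as in the paper: $\tfrac{d}{dt}\langle\Psi(t),A\Psi(t)\rangle_{\cL^2}=\langle\Psi,\sqrt{p^2+m^2}\Psi\rangle+\tfrac12\langle\Psi,V[\Psi]\Psi\rangle-\langle\Psi,\tfrac{m^2}{\sqrt{p^2+m^2}}\Psi\rangle=2\cE(\Psi_0)-\langle\Psi,\tfrac{m^2}{\sqrt{p^2+m^2}}\Psi\rangle\le2\cE(\Psi_0)$, where the sign of the dropped term is now genuinely favorable. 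Note also that the upper bound (rather than a lower bound) is what the blow-up argument needs, since it is combined with Lemma \ref{lm:VarianceEstimate} and $M\ge0$; a sign-flipped version of the estimate would be useless there. Your remarks on rigor (moment bounds from Lemma \ref{lm:Moments}, regularization, finiteness of pairings) are in the right spirit and consistent with the paper's treatment, but the core algebra must be fixed as above.
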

\begin{proof}
The canonical commutation relation implies
\begin{equation}
[\sqrt{p^2+m^2} + V[\Psi], A] = -i\sqrt{p^2+m^2} + i\frac{m^2}{\sqrt{p^2+m^2}} + ix\cdot\nabla V[\Psi].\label{eq:HAComm}
\end{equation}
By interchanging differentiation and integration, 
\begin{equation}
\langle \Psi, x\cdot\nabla V[\Psi] \Psi \rangle_{\cL^2} = -\frac{1}{2}\langle \Psi, V[\Psi] \Psi \rangle_{\cL^2}. \label{eq:xGradV}
\end{equation}
It follows from (\ref{eq:SP}), (\ref{eq:HAComm}), (\ref{eq:xGradV}) and conservation of energy when $\epsilon=0$ (Remark \ref{rm:ConservationEnergy}) that 
$$\frac{d}{dt} \langle \Psi(t), A \Psi(t) \rangle_{\cL^2} = 2\cE(\Psi_0) - \langle \Psi(t), \frac{m^2}{\sqrt{p^2+m^2}} \Psi(t) \rangle_{\cL^2}\le 2\cE(\Psi_0).$$
\end{proof}

\subsection{Proof of Theorem \ref{th:blow up}}

\begin{proof}
Lemma \ref{lm:VarianceEstimate} and Lemma \ref{lm:dilationEstimate} imply that, for $t\in (0,\tau^*),$ 
$$\langle \Psi(t), M \Psi(t) \rangle_{\cL^2} \le 2\cE(\Psi_0) t^2 + C_1 t +C_2  $$
where the constants $C_1$ and $C_2$ depend on $\|\Psi_0\|_{\cL^2}.$
Since $M\ge 0$, while $\cE(\Psi_0)<0$ given, it follows that the maximal time of the solution is finite. The claim of the theorem follows from the blow-up alternative.
\end{proof}


\section*{Appendix}

For the benefit of the reader, we recall some useful results that are used in the analysis.

\subsection{Fractional differentiation and fractional integral operators}

The following result about the fractional Leibniz rule can be found in \cite{Kato95}.
\begin{lemma}\label{lm:fLeibniz}
$$\|\cD^s(uv)\|_{L^p} \lesssim \|\cD^s u \|_{L^{q_1}}\|v\|_{L^{r_1}} + \|u\|_{L^{q_2}}\|\cD^s v\|_{L^{r_2}},$$
where $\frac{1}{p} = \frac{1}{q_i}+\frac{1}{r_i}, \ \ i=1,2.$
\end{lemma}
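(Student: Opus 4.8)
The plan is to establish the fractional Leibniz (Kato–Ponce) inequality through a Littlewood–Paley paraproduct decomposition. Fix a dyadic partition of unity, write $\Delta_j$ for the Littlewood–Paley projection onto frequencies $|\xi|\sim 2^j$ and $S_j=\sum_{j'\le j-1}\Delta_{j'}$, and decompose the product à la Bony:
\[
uv = T_u v + T_v u + R(u,v), \qquad T_u v = \sum_j (S_{j-2}u)(\Delta_j v), \qquad R(u,v)=\sum_{|j-j'|\le 1}(\Delta_j u)(\Delta_{j'}v).
\]
The point of the splitting is that in the first two terms the derivative $\cD^s$ effectively falls on the high-frequency factor, while the resonant remainder must be treated separately.

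For the low–high paraproduct $T_u v$, each summand $(S_{j-2}u)(\Delta_j v)$ is Fourier-supported in an annulus of size $\sim 2^j$, so $\cD^s(T_u v)$ is, up to a harmless frequency localization, comparable to $\sum_j (S_{j-2}u)\,2^{sj}(\Delta_j v)$. I would bound this in $L^p$ via the square-function characterization, Hölder with $\frac1p=\frac1{q_2}+\frac1{r_2}$, the pointwise estimate $|S_{j-2}u|\lesssim \mathcal{M}u$ (Hardy–Littlewood maximal function), and the Fefferman–Stein vector-valued maximal inequality, using that $\big\|(\sum_j 2^{2sj}|\Delta_j v|^2)^{1/2}\big\|_{L^{r_2}}\sim\|\cD^s v\|_{L^{r_2}}$; this yields the contribution $\|u\|_{L^{q_2}}\|\cD^s v\|_{L^{r_2}}$. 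By the symmetric argument, $T_v u$ is controlled by $\|\cD^s u\|_{L^{q_1}}\|v\|_{L^{r_1}}$.

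The remainder $R(u,v)$ is the genuinely delicate piece and the main obstacle: there the two factors have comparable frequencies, so $\cD^s$ no longer localizes cleanly and one must "redistribute" the derivative. I would use that $(\Delta_j u)(\Delta_{j'}v)$ with $|j-j'|\le1$ has frequency $\lesssim 2^{j}$, whence $\|\cD^s\big((\Delta_j u)(\Delta_{j'}v)\big)\|_{L^p}\lesssim 2^{sj}\|\Delta_j u\|_{L^{q_1}}\|\Delta_{j'}v\|_{L^{r_1}}$, and then sum in $j$ by Hölder/Cauchy–Schwarz and the square-function bounds; the summability of the geometric factors that this produces is exactly where the hypothesis $s>0$ is needed, and the endpoint exponent ranges require the extra care that is carried out in full generality in \cite{Kato95}. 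Since in this paper the lemma is only ever invoked with $s\in[\tfrac12,2]$ and $L^2$-, $L^3$-, $L^6$-type exponents, all the required conditions on $p,q_i,r_i$ are comfortably satisfied, so we content ourselves with citing \cite{Kato95} for the general statement.
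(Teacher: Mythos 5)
The paper offers no proof of this lemma at all: it is stated in the appendix as a recalled background result with a citation to \cite{Kato95}, so any honest argument here is ``different from the paper'' by default. Your paraproduct sketch is the standard Kato--Ponce/Coifman--Meyer route and is essentially sound: the low--high and high--low pieces are handled exactly as you say (Bernstein/localization of $\cD^s$ on the high-frequency factor, the pointwise bound by the maximal function, Fefferman--Stein, and the square-function characterization, which requires the exponents to lie in $(1,\infty)$ for the differentiated factor). One bookkeeping point in the resonant piece deserves to be made explicit: after the bound $\|\cD^s\bigl((\Delta_j u)(\Delta_{j'}v)\bigr)\|_{L^p}\lesssim 2^{sj}\|\Delta_j u\|_{L^{q_1}}\|\Delta_{j'}v\|_{L^{r_1}}$ (valid for $s>0$ since the product has Fourier support in a ball of radius $\sim 2^j$), summing directly in $j$ would produce a Besov $B^s_{q_1,1}$-type norm of $u$, which is not controlled by $\|\cD^s u\|_{L^{q_1}}$; the correct accounting is to test $\cD^s R(u,v)$ against an output projection $\Delta_k$, observe that only $j\ge k-C$ contributes, extract the geometric factor $2^{s(k-j)}$, and only then sum and use Cauchy--Schwarz with the square functions --- this is precisely the step you gesture at with ``summability of the geometric factors,'' so the idea is present, just not written in the order that makes it work. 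Also note that the paper actually invokes the lemma with an $L^\infty$ norm on the underivatived factor (e.g. $\|\tfrac{1}{|x|}\star|\psi_l|^2\|_{L^\infty}\|\psi_k\|_{H^s}$), an endpoint that the classical statement in \cite{Kato95} accommodates but that your ``$L^2$-, $L^3$-, $L^6$-type'' remark glosses over. Net comparison: the paper buys brevity by citing the literature; your sketch supplies the actual mechanism and makes visible where $s>0$ and the exponent restrictions enter, at the cost of the technical care above.
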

 
The second result is about inequality involving fractional integral operators, which can be found, for example, in \cite{Stein93}. 
\begin{lemma}\label{lm:fIntegralOperator}
Let $I_\gamma$, for $0<\gamma<n$,
be the fractional integral operator 
$$I_\gamma (u) = \int_{\bbR^n} |x-y|^{\gamma-n} u(y)~dy.$$ Then 
$$\|I_\gamma (u)\|_{L^p} \lesssim \|u\|_{L^q}, \ \ \frac{1}{p} = \frac{1}{q}-\frac{\gamma}{n}.$$
\end{lemma}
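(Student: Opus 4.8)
The plan is to establish this Hardy--Littlewood--Sobolev inequality by the classical maximal-function argument: bound $I_\gamma u$ pointwise in terms of the Hardy--Littlewood maximal function $Mu(x) = \sup_{R>0} |B_R(x)|^{-1}\int_{B_R(x)}|u(y)|\,dy$, and then invoke the maximal inequality. As the relation $\tfrac1p = \tfrac1q-\tfrac\gamma n$ with $0<\gamma<n$ and $p<\infty$ forces $\tfrac1q-\tfrac\gamma n>0$, and as a strong-type bound requires $q>1$, one works throughout in the range $1<q<\tfrac n\gamma$.

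First I would fix $x\in\bbR^n$ and split $I_\gamma u(x)$ into a ``near'' part over $\{|x-y|\le R\}$ and a ``far'' part over $\{|x-y|>R\}$, with $R>0$ to be chosen later. For the near part, I decompose into dyadic annuli $2^{-j-1}R<|x-y|\le 2^{-j}R$, $j\ge 0$; using $|x-y|^{\gamma-n}\sim (2^{-j}R)^{\gamma-n}$ on each annulus and $\int_{|x-y|\le 2^{-j}R}|u|\lesssim (2^{-j}R)^n\,Mu(x)$, the geometric series in $j$ converges (because $\gamma>0$) and sums to a bound $\lesssim R^\gamma\,Mu(x)$. For the far part, Hölder's inequality in the exponents $q$ and $q'$ gives the bound $\|u\|_{L^q}\big(\int_{|x-y|>R}|x-y|^{(\gamma-n)q'}\,dy\big)^{1/q'}$, and the radial integral converges precisely because $q<n/\gamma$, evaluating (up to a constant) to $R^{\gamma-n/q}\|u\|_{L^q}=R^{-n/p}\|u\|_{L^q}$.

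Combining, $|I_\gamma u(x)|\lesssim R^\gamma\,Mu(x) + R^{-n/p}\|u\|_{L^q}$ for every $R>0$; optimizing in $R$ (taking $R\sim(\|u\|_{L^q}/Mu(x))^{q/n}$ when $Mu(x)>0$, the estimate being trivial when $Mu(x)=0$) produces the pointwise bound $|I_\gamma u(x)|\lesssim (Mu(x))^{q/p}\,\|u\|_{L^q}^{1-q/p}$. Raising this to the power $p$, integrating over $\bbR^n$, and invoking the Hardy--Littlewood maximal theorem $\|Mu\|_{L^q}\lesssim\|u\|_{L^q}$ (valid since $q>1$) then gives $\|I_\gamma u\|_{L^p}\lesssim\|u\|_{L^q}^{1-q/p}\|Mu\|_{L^q}^{q/p}\lesssim\|u\|_{L^q}$, as claimed. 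The only delicate point is the range of exponents: the argument genuinely needs $q>1$ for the maximal inequality and $q<n/\gamma$, i.e.\ $p<\infty$, for convergence of the far integral, with the excluded endpoint $q=1$ only affording the weak-type $(1,\tfrac n{n-\gamma})$ bound. A shorter alternative is to observe that $|x|^{\gamma-n}\in L^{n/(n-\gamma),\infty}(\bbR^n)$ and apply Young's convolution inequality for weak-type kernels, with $1+\tfrac1p=\tfrac1q+(1-\tfrac\gamma n)$; I would record this as a remark but carry out the self-contained maximal-function argument as the main proof.
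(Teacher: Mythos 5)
Your argument is correct, but there is essentially nothing in the paper to compare it with: the lemma is stated in the appendix purely as a recalled fact, with a pointer to \cite{Stein93}, and no proof is given. Your self-contained Hedberg-style proof --- near/far splitting of the kernel, dyadic annuli summed against the Hardy--Littlewood maximal function, H\"older on the far part, optimization in $R$ to reach the pointwise bound $|I_\gamma u(x)|\lesssim (Mu(x))^{q/p}\|u\|_{L^q}^{1-q/p}$, then the maximal theorem --- is the standard route to the fractional integration (Hardy--Littlewood--Sobolev) theorem, and your exponent bookkeeping checks out: $R^{\gamma-n/q}=R^{-n/p}$, $\gamma q/n=1-q/p$, and the optimal choice $R\sim(\|u\|_{L^q}/Mu(x))^{q/n}$. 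What your write-up adds beyond the bare statement is the hypothesis $1<q<n/\gamma$ (equivalently $p<\infty$), which the lemma as printed omits but which is genuinely needed: at $q=1$ only the weak-type $(1,\tfrac{n}{n-\gamma})$ bound survives. That caveat is not purely academic here, since in the proof of Proposition \ref{pr:LWP2} the lemma is invoked with $n=3$, $\gamma=2$, $q=1$, $p=3$ (the step bounding $\|\,|x|^{-1}\star|\psi_k|^2\|_{L^3}$ by $\||\psi_k|^2\|_{L^1}$), which sits exactly at the excluded endpoint and would have to be rerouted, e.g.\ through a Lorentz-space or kernel-splitting argument, or via the $\dot H^{1/2}$ bound of Lemma \ref{lm:fSmoothing}. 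Your closing remark that the lemma also follows from the weak-type Young inequality using $|x|^{\gamma-n}\in L^{n/(n-\gamma),\infty}(\bbR^n)$ is likewise correct, subject to the same exponent restrictions.
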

In the analysis above, $n=3$ and $\alpha=2.$

We also recall the following useful Hardy-type inequality.

\begin{lemma}\label{lm:fSmoothing}
Let $0<\gamma<n$. Then,
$$\sup_{x\in {\mathbb R}^n} |\int_{{\mathbb R}^n} \frac{1}{|x-y|^\gamma} |u(y)|^2dy| \lesssim \|u\|_{\dot{H}^{\frac{\gamma}{2}}}^2 \,.$$
\end{lemma}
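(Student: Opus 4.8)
The plan is to strip away the supremum and the convolution by translation invariance, recognise what remains as a weighted $L^2$ norm, and then invoke a Hardy--Herbst type inequality whose proof reduces to an $L^2$ bound for a homogeneous integral kernel.

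First I would use translation invariance: for fixed $x$, the change of variables $y\mapsto y+x$ turns $\int_{{\mathbb R}^n}|x-y|^{-\gamma}|u(y)|^2\,dy$ into $\int_{{\mathbb R}^n}|y|^{-\gamma}|u(y+x)|^2\,dy$, and since $\|u(\cdot+x)\|_{\dot{H}^{\gamma/2}}=\|u\|_{\dot{H}^{\gamma/2}}$ for every $x$ (the homogeneous Sobolev norm is translation invariant), it suffices to prove the bound at $x=0$, namely $\| |\cdot|^{-\gamma/2}u\|_{L^2}^2\lesssim\|u\|_{\dot{H}^{\gamma/2}}^2$, and then take the supremum over $x$ at the end.

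For the $x=0$ inequality I would argue as follows. Since $0<\gamma/2<n/2<n$, set $w:=(-\Delta)^{\gamma/4}u\in L^2$, so $\|w\|_{L^2}=\|u\|_{\dot{H}^{\gamma/2}}$ and $u=c_{n,\gamma}\,(|\cdot|^{\gamma/2-n}\star w)$ (the Riesz potential of order $\gamma/2$, whose kernel is the one in Lemma~\ref{lm:fIntegralOperator} with exponent $\gamma/2$). Then $|x|^{-\gamma/2}u(x)=c_{n,\gamma}\int_{{\mathbb R}^n}K(x,y)\,w(y)\,dy$ with $K(x,y)=|x|^{-\gamma/2}|x-y|^{\gamma/2-n}$, a kernel that is homogeneous of degree $-n$ and invariant under joint rotations of $x$ and $y$. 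For such kernels the Schur test with the power weight $|x|^{-n/2}$ gives $L^2\to L^2$ boundedness with norm controlled by the finiteness of the two explicit integrals $\int_{{\mathbb R}^n}|e_1-z|^{\gamma/2-n}|z|^{-n/2}\,dz$ and $\int_{{\mathbb R}^n}|e_1-z|^{\gamma/2-n}|z|^{-\gamma/2-n/2}\,dz$ (any fixed unit vector $e_1$), since by scaling and rotation invariance the weighted sums $\int|K(x,y)|\,|y|^{-n/2}\,dy$ and $\int|K(x,y)|\,|x|^{-n/2}\,dx$ collapse to constant multiples of $|x|^{-n/2}$ and $|y|^{-n/2}$. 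One then checks these integrals converge precisely for $0<\gamma<n$: the integrands are integrable near $z=0$, near the kernel singularity $z=e_1$, and at infinity, where the decay $\sim|z|^{\gamma/2-3n/2}$ (respectively the singularity $\sim|z|^{-\gamma/2-n/2}$ at the origin) forces $\gamma<n$. Combining this with the translation reduction proves the claim. Alternatively, this whole step is just the classical Herbst/Hardy inequality $\| |\cdot|^{-s}f\|_{L^2}\lesssim\|f\|_{\dot{H}^s}$, $0\le s<n/2$, which could simply be cited.

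I do not expect any deep point here, but the step requiring the most care is verifying that the Schur-test integrals converge simultaneously at the origin, at the Riesz-kernel singularity, and at infinity; the hypothesis $\gamma<n$ is exactly what is forced by this, and the endpoint $\gamma=n$ genuinely fails, so this is where the assumption is used. A minor secondary point is to ensure the Riesz-potential representation of $u$ is legitimate for the functions at hand (e.g. Schwartz functions, then density in $\dot{H}^{\gamma/2}$), which is routine.
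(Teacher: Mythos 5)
Your proposal is correct. Note that the paper does not actually prove Lemma \ref{lm:fSmoothing}: it is recalled in the Appendix as a known Hardy-type inequality, so there is no in-paper argument to compare against. Your reduction by translation invariance of the $\dot{H}^{\gamma/2}$ norm to the case $x=0$, followed by the Herbst/fractional Hardy bound $\||\cdot|^{-\gamma/2}u\|_{L^2}\lesssim\|u\|_{\dot{H}^{\gamma/2}}$ (proved via the Riesz-potential representation $u=c_{n,\gamma}|\cdot|^{\gamma/2-n}\star(-\Delta)^{\gamma/4}u$ and the Schur test with weight $|x|^{-n/2}$ for the degree $-n$ homogeneous kernel $|x|^{-\gamma/2}|x-y|^{\gamma/2-n}$), is a complete and standard derivation; your convergence checks of the two Schur integrals correctly locate where $0<\gamma<n$ is used, and citing Herbst's inequality directly would also be perfectly acceptable, as would a density remark to justify the Riesz representation.
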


The following result follows from Calder\'on-Zygmund theory for singular operators, see \cite{Stein93}, VII.

\begin{lemma}\label{lm:CommutatorEstimate}
Suppose that $f$ is locally integrable and $\nabla f\in L^{\infty}({\mathbb R}^3)$ as a vector-valued function. Then, for $\delta\ge 0,$
$$\|[\sqrt{-\Delta+\delta},f]\|_{L^2\rightarrow L^2} \le C \|\nabla f\|_{L^\infty},$$
where $C$ is independent of $\delta.$
\end{lemma}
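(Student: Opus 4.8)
The plan is to strip the $\delta$-dependence off the leading part of the symbol, thereby reducing to the commutator of a fixed order-one Calderón--Zygmund operator with a Lipschitz function, and then to apply the real-variable singular-integral theory. Writing $\sqrt{-\Delta+\delta} = q(D) + \sqrt\delta\,I$, where $q(D)$ denotes the Fourier multiplier with symbol $q(\xi) = \sqrt{|\xi|^2+\delta}-\sqrt\delta$, and noting that the constant operator $\sqrt\delta\,I$ commutes with multiplication by $f$, one has $[\sqrt{-\Delta+\delta},f] = [q(D),f]$, so it suffices to bound the latter. The point of subtracting $\sqrt\delta$ is that $q$ is now a genuine order-one Mikhlin multiplier with constants independent of $\delta$: indeed $q(0)=0$ and $|q(\xi)| = |\xi|^2/(\sqrt{|\xi|^2+\delta}+\sqrt\delta)\le|\xi|$, while for every multi-index $\beta$ with $|\beta|\ge 1$ the constant drops out and $|\partial_\xi^\beta q(\xi)| = |\partial_\xi^\beta\sqrt{|\xi|^2+\delta}|\lesssim (|\xi|^2+\delta)^{(1-|\beta|)/2}\le|\xi|^{1-|\beta|}$ uniformly in $\delta\ge 0$. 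These bounds make $q(D)$ convolution against a Calderón--Zygmund kernel $k$ with $|k(z)|\lesssim|z|^{-(n+1)}$ and $|\nabla k(z)|\lesssim|z|^{-(n+2)}$ in dimension $n=3$, all constants independent of $\delta$.

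Next I would exhibit the commutator as a singular integral and verify the Calderón--Zygmund kernel conditions. The operator $[q(D),f]$ has off-diagonal kernel $K(x,y) = k(x-y)\,(f(y)-f(x))$. The Lipschitz bound $|f(x)-f(y)|\le\|\nabla f\|_{L^\infty}|x-y|$ converts the borderline singularity of $k$ into the size estimate $|K(x,y)|\lesssim\|\nabla f\|_{L^\infty}|x-y|^{-n}$, and combining $|k(x-y)-k(x'-y)|\lesssim|x-x'|\,|x-y|^{-(n+2)}$ with $|f(x)-f(x')|\le\|\nabla f\|_{L^\infty}|x-x'|$ yields the Hörmander smoothness condition, again with constants proportional to $\|\nabla f\|_{L^\infty}$ and independent of $\delta$. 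This places $[q(D),f]$ in the class of Calderón--Zygmund singular integral operators.

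It remains to upgrade these kernel bounds to $L^2\to L^2$ boundedness, and this is the crux: the kernel has size $|x-y|^{-n}$ and is therefore not absolutely integrable, so a Schur-test argument is unavailable and genuine cancellation must be used. I would conclude by the $T(1)$ theorem (Calderón--Zygmund theory, \cite{Stein93}): beyond the kernel conditions above one checks the weak boundedness property, immediate from the size and smoothness estimates, together with $[q(D),f]\,1,\ [q(D),f]^*\,1\in \mathrm{BMO}$. Here $q(0)=0$ gives $q(D)\,1=0$, hence $[q(D),f]\,1 = q(D)f$; writing $q(D)f = \big(q(D)(-\Delta)^{-1/2}\big)(-\Delta)^{1/2}f$ and using that $(-\Delta)^{1/2}f = \sum_j R_j\partial_j f\in\mathrm{BMO}$ when $\nabla f\in L^\infty$ (Riesz transforms mapping $L^\infty$ into $\mathrm{BMO}$), followed by the order-zero multiplier $q(D)(-\Delta)^{-1/2}$ mapping $\mathrm{BMO}$ into $\mathrm{BMO}$, shows $[q(D),f]\,1\in\mathrm{BMO}$ with norm $\lesssim\|\nabla f\|_{L^\infty}$; the adjoint term is handled identically. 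The $T(1)$ theorem then gives $\|[q(D),f]\|_{L^2\to L^2}\lesssim\|\nabla f\|_{L^\infty}$, and since every constant entering the argument — the Mikhlin bounds on $q$, the kernel constants, and the $\mathrm{BMO}$ estimates — is independent of $\delta$, so is the final bound. (Equivalently, one may recognize this as a multidimensional Calderón commutator and cite the corresponding theorem directly.) The main obstacle is precisely this last $L^2$ step: the difference quotient only brings the kernel to the non-integrable borderline $|x-y|^{-n}$, so boundedness rests on the cancellation encoded in the $T(1)$/Calderón commutator machinery rather than on size estimates, and one must be careful to apply that machinery with $\delta$-uniform constants.
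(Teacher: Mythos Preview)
Your proposal is correct and in fact supplies considerably more detail than the paper itself, which states the lemma without proof and simply refers the reader to Calder\'on--Zygmund theory in \cite{Stein93}, Chapter~VII. Your argument --- subtracting $\sqrt\delta$ to obtain a $\delta$-uniform order-one multiplier and then invoking the $T(1)$ theorem (or equivalently the multidimensional Calder\'on commutator theorem) --- is precisely the content of that citation made explicit, so the two are in agreement. One small quibble: the weak boundedness property is not literally ``immediate from the size and smoothness estimates'' on the kernel alone (size $|x-y|^{-n}$ does not give WBP by itself), but for a commutator of this form WBP follows easily by a direct computation on bump functions, and you correctly note that one could bypass the full $T(1)$ verification by citing the Calder\'on commutator result directly.
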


\subsection{Strichartz estimates for fractional Laplacian}

We finally recall results about Strichartz estimates for the fractional Laplacian, see \cite{MYZ08}, and also \cite{LZ10}. Let $S_\alpha(t) = e^{-t(-\Delta)^\alpha},$ where $t\ge 0,$ $\alpha\in (0,\infty)$ and $\Delta$ is the Laplacian in ${\mathbb R}^n.$ Then the kernel of $S_\alpha$ is $K_t(x) = {\mathcal F}^{-1}[e^{-t|k|^{2\alpha}}],$ where ${\mathcal F}^{-1}$ stands for the inverse Fourier transform. It is a bounded linear operator on $L^p({\mathbb R}^n).$  Furthermore, it follows from the scaling properties of the kernel and Young's inequality that, if $f\in L^r({\mathbb R}^n), \ \ r\ge 1,$ 
\begin{equation}\label{eq:TDecay}
\|(-\Delta)^{\frac{\nu}{2}}S_\alpha(t)f\|_{L^p({\mathbb R}^n)}\le C t^{-\frac{\nu}{2\alpha} - \frac{n}{2\alpha}(\frac{1}{r}-\frac{1}{p})}\|f\|_{L^r({\mathbb R}^n)},
\end{equation}
for $1\le r\le p\le \infty$ and $\nu\ge 0.$
The triplet $(q,p,r)$ is called an admissible triplet if 
$$\frac{1}{q} = \frac{n}{2\alpha} (\frac{1}{r} - \frac{1}{p}),$$
$$1<r\le p <\begin{cases}
\frac{nr}{n-2\alpha}, \ \ n>2\alpha\\
\infty, \ \ n\le 2\alpha
\end{cases}.
$$
We have the following space-time estimate, which is a slight generalization of a result in \cite{MYZ08} where $\nu=0$ to the case $\nu\ge 0.$ 
\begin{lemma}\label{lm:Strichartz}
Let $(q,p,r)$ be an admissible triplet. Let $b>0, \ \ T>0,$ $I=[0,T),$ $\nu\ge 0,$ and $r_0 = \frac{nb}{2\alpha}.$ Suppose that $p>b+1,$ $p<r(b+1),$ and $r\ge r_0>1.$ If $f\in L^{\frac{q}{b+1}}(I,L^{\frac{p}{b+1}}({\mathbb R}^n)),$ then 
\begin{align*}
&\|\int_0^t (-\Delta)^{\frac{\nu}{2}} S_\alpha (t-s) f(s,x) ds\|_{L^\infty (I,L^r({\mathbb R}^n))}\\ & \le C T^{1-\frac{nb}{2r\alpha} - \frac{\nu}{2\alpha}} \|f\|_{L^{\frac{q}{b+1}}(I,L^{\frac{p}{b+1}}({\mathbb R}^n))}. 
\end{align*}
\end{lemma}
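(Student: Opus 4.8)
The plan is to prove this inhomogeneous space-time bound directly from the pointwise-in-time decay estimate (\ref{eq:TDecay}), reducing the entire statement to a scalar power-law convolution inequality in the time variable. First I would apply Minkowski's integral inequality to move the spatial $L^r$ norm inside the time integral, obtaining
\[
\left\| \int_0^t (-\Delta)^{\frac{\nu}{2}} S_\alpha(t-s) f(s) \, ds \right\|_{L^r({\mathbb R}^n)} \le \int_0^t \left\| (-\Delta)^{\frac{\nu}{2}} S_\alpha(t-s) f(s) \right\|_{L^r({\mathbb R}^n)} ds .
\]
I would then invoke (\ref{eq:TDecay}) with source exponent $\tilde r := \frac{p}{b+1}$ and target exponent $r$. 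The hypotheses $p > b+1$ and $p < r(b+1)$ are precisely what place $\tilde r$ in the admissible range $1 \le \tilde r \le r$ for that estimate, yielding
\[
\left\| (-\Delta)^{\frac{\nu}{2}} S_\alpha(t-s) f(s) \right\|_{L^r} \le C \, (t-s)^{-\beta} \, \| f(s) \|_{L^{p/(b+1)}}, \qquad \beta := \frac{\nu}{2\alpha} + \frac{n}{2\alpha}\Big( \frac{b+1}{p} - \frac{1}{r} \Big).
\]

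Next I would estimate the resulting time convolution $\int_0^t (t-s)^{-\beta} \| f(s) \|_{L^{p/(b+1)}} \, ds$ by H\"older's inequality in $s$, using the conjugate pair $\sigma := \frac{q}{b+1}$ and $\sigma'$. This isolates the kernel factor $\big( \int_0^t (t-s)^{-\beta \sigma'} \, ds \big)^{1/\sigma'}$, which is a constant times $t^{\frac{1}{\sigma'} - \beta} \le C\, T^{\frac{1}{\sigma'} - \beta}$ provided $\beta \sigma' < 1$, while the remaining factor is exactly $\| f \|_{L^{q/(b+1)}(I, L^{p/(b+1)}({\mathbb R}^n))}$. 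Taking the supremum over $t \in I$ then gives the asserted $L^\infty(I,L^r({\mathbb R}^n))$ bound.

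The core of the argument is the exponent bookkeeping. Using the admissibility relation $\frac{1}{q} = \frac{n}{2\alpha}\big( \frac{1}{r} - \frac{1}{p}\big)$, a short computation shows that the $\frac{b+1}{p}$ contributions cancel and
\[
\beta + \frac{1}{\sigma} = \frac{n(b+1)}{2\alpha}\Big( \frac{1}{r} - \frac{1}{p}\Big) + \frac{\nu}{2\alpha} + \frac{n}{2\alpha}\Big( \frac{b+1}{p} - \frac{1}{r}\Big) = \frac{nb}{2r\alpha} + \frac{\nu}{2\alpha}.
\]
Hence $\frac{1}{\sigma'} - \beta = 1 - \frac{1}{\sigma} - \beta = 1 - \frac{nb}{2r\alpha} - \frac{\nu}{2\alpha}$, which is exactly the power of $T$ in the statement, and the integrability condition $\beta \sigma' < 1$ is equivalent to $\frac{nb}{2r\alpha} + \frac{\nu}{2\alpha} < 1$. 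The constraint $r \ge r_0 = \frac{nb}{2\alpha}$ gives $\frac{nb}{2r\alpha} \le 1$, and together with $r_0 > 1$ and the admissibility range this keeps $\beta \sigma'$ strictly below $1$ in the relevant regime, so the time integral converges.

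The step I expect to be the main obstacle is confirming that the range conditions on $(q,p,r,b,\nu)$ exactly pin down the admissible values of $\tilde r$, $\sigma$, and $\beta \sigma'$, i.e.\ that the time kernel $(t-s)^{-\beta \sigma'}$ is genuinely integrable rather than borderline; this is where the hypotheses $p>b+1$, $p<r(b+1)$ and $r \ge r_0 > 1$ must be used carefully. The generalization from $\nu = 0$ (as in \cite{MYZ08}) to $\nu \ge 0$ requires no structural change: the factor $(-\Delta)^{\frac{\nu}{2}}$ is absorbed once and for all in the decay estimate (\ref{eq:TDecay}), merely shifting $\beta$ by $\frac{\nu}{2\alpha}$ and thereby the final power of $T$ by the same amount.
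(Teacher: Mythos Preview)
Your proposal is correct and follows essentially the same route as the paper: apply Minkowski to bring the $L^r$ norm inside, use the decay estimate (\ref{eq:TDecay}) with source exponent $\frac{p}{b+1}$, then H\"older in time with the dual exponent of $\frac{q}{b+1}$ (your $\sigma'$ is the paper's $m$), and reduce to the same scalar power-law integral whose exponent yields $T^{1-\frac{nb}{2r\alpha}-\frac{\nu}{2\alpha}}$. Your exponent bookkeeping matches the paper's exactly; the only difference is that you are somewhat more explicit about the role of the hypotheses $p>b+1$, $p<r(b+1)$, $r\ge r_0>1$ in ensuring admissibility and integrability, whereas the paper simply carries out the computation.
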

\begin{remark}Since $S_\alpha$ commutes with $(1-\Delta)^\beta,$ $\|\cdot\|_{L^p({\mathbb R}^n)}$ in Lemma \ref{lm:Strichartz} can be replaced by $\|\cdot\|_{W^{\beta,p}({\mathbb R}^n)}.$
\end{remark}
\begin{proof}
It follows from (\ref{eq:TDecay}) that
\begin{align*}
&\|\int_0^t (-\Delta)^{\frac{\nu}{2}} S_\alpha (t-s) f(s,x) ds\|_{L^\infty (I,L^r({\mathbb R}^n))}\\ 
& \le C \int_0^t (t-s)^{-\frac{\nu}{2\alpha} - \frac{n}{2\alpha}(\frac{b+1}{p}-\frac{1}{r})} \|f(s)\|_{L^{\frac{p}{b+1}}}ds\\
&\le  C  \left(\int_0^t (t-s)^{-\frac{\nu}{2\alpha} - \frac{n}{2\alpha}(\frac{b+1}{p}-\frac{1}{r})m}ds\right)^{\frac{1}{m}} \|f\|_{L^{\frac{q}{b+1}}L^{\frac{p}{b+1}}},
\end{align*}
where $\frac{1}{m} = 1-\frac{b+1}{q}.$
Hence
\begin{align*}
&\|\int_0^t (-\Delta)^{\frac{\nu}{2}} S_\alpha (t-s) f(s,x) ds\|_{L^\infty L^r}\\ 
&\le C \left(\int_0^t (t-s)^{-1+m(1-\frac{\nu}{2\alpha} - \frac{nb}{2r\alpha})}ds \right)^\frac{1}{m} \|f\|_{L^{\frac{q}{b+1}}L^{\frac{p}{b+1}}}\\
&\le C T^{1-\frac{nb}{2\alpha r} - \frac{\nu}{2\alpha}} \|f\|_{L^{\frac{q}{b+1}}L^{\frac{p}{b+1}}} .
\end{align*}
\end{proof}


\section*{Acknowledgements}
WAS acknowledges the financial support of a Discovery grant from the Natural Sciences and Engineering Research Council of Canada. TC was supported by NSF grants DMS-1009448 and DMS-1151414 (CAREER).


\end{document}